\pdfoutput=1
\documentclass[sigconf]{acmart}

\usepackage{hyperref}

\usepackage[ruled, vlined, nofillcomment, linesnumbered]{algorithm2e}
\usepackage{url}
\usepackage{amsmath}
\usepackage{graphicx}
\usepackage{subcaption}
\usepackage[english]{babel}
\usepackage{booktabs}
\usepackage{verbatim}
\usepackage{float}
\usepackage{footmisc}
\usepackage{xspace}
\usepackage{tikz}
\usepackage{pgfplots}
\usepackage{pgfplotstable}
\usetikzlibrary{shapes.geometric}
\usepackage{balance} 
\usepackage{etoolbox}

\makeatletter


\newcommand{\set}[1]{\left\{#1\right\}}
\newcommand{\pr}[1]{\left(#1\right)}
\newcommand{\fpr}[1]{\mathopen{}\left(#1\right)}

\newcommand{\abs}[1]{{\left|#1\right|}}

\newcommand{\p}{\textbf{P}}
\newcommand{\np}{\textbf{NP}}

\newcommand{\define}{\leftarrow}

\newcommand{\ceil}[1]{\left\lceil#1\right\rceil}

\DeclareRobustCommand{\dispfunc}[2]{%
  \ensuremath{%
  \ifthenelse{\equal{#2}{}}%
    {\mathit{#1}}%
    {\mathit{#1}\fpr{#2}}}}

\newcommand{\score}[1]{\dispfunc{q}{#1}}
\newcommand{\scorev}[1]{\dispfunc{\deg_{\lambda}}{#1}}

\newcommand{\bigO}[1]{\dispfunc{\mathcal{O}}{#1}}

\newcommand{\alggreedy}{\textsc{Grd-Naive}\xspace}

\newcommand{\alggreedyfastest}{\textsc{Greedy}\xspace}

\newcommand{\algdenseg}{\textsc{STC-Cut}\xspace}
\newcommand{\algdensec}{\textsc{STC-Peel}\xspace}
\newcommand{\alglpstc}{\textsc{STC-LP}\xspace}
\newcommand{\algip}{\textsc{STC-ILP}\xspace}
\newcommand{\algdg}{\textsc{Cut}\xspace}
\newcommand{\algdc}{\textsc{Peel}\xspace}

\newcommand{\algvcm}{\textsc{VC-Mat}\xspace}
\newcommand{\algminstc}{\textsc{Apr-MinSTC}\xspace}

\newcommand{\algdynvc}{\textsc{Dynamic-Vertex-cover}\xspace}

\newcommand{\acu}{\textsc{CT}\xspace}
\newcommand{\ape}{\textsc{PL}\xspace}
\newcommand{\ago}{\textsc{GR}\xspace}
\newcommand{\agr}{\textsc{GR}\xspace}
\newcommand{\aip}{\textsc{IP}\xspace}
\newcommand{\alp}{\textsc{LP}\xspace}

\newcommand{\dtname}[1]{\textsl{#1}}

\newcommand{\calS}{\ensuremath{{\mathcal S}}}

\newcommand{\prbcovermin}{\textsc{Min-Vertex-cover}\xspace}

\newcommand{\prbmaxclique}{\textsc{Max-Clique}\xspace}

\newcommand{\prbminSTC}{\textsc{MinSTC}\xspace}

\newcommand{\prbstrwk}{\textsc{stc-den}\xspace}
\newcommand{\prbstrwkr}{\textsc{stc-relax}\xspace}

\newcommand{\stc}{STC\xspace}

\newtheorem{problem}{Problem}

\SetKwComment{tcpas}{\{}{\}}
\SetCommentSty{textnormal}
\SetArgSty{textnormal}
\SetKw{False}{false}
\SetKw{True}{true}
\SetKw{Null}{null}
\SetKwInOut{Output}{output}
\SetKwInOut{Input}{input}
\SetKw{AND}{and}
\SetKw{OR}{or}
\SetKw{Break}{break}

\SetKwFor{Until}{until}{do}{}



\definecolor{yafcolor5}{rgb}{0.141, 0.345, 0.643}
\SetKw{Output}{output}

\DeclareRobustCommand{\dispfunc}[2]{%
	\ensuremath{%
		\ifthenelse{\equal{#2}{}}%
			{\mathit{#1}}%
			{\mathit{#1}\fpr{#2}}}}

\SetKw{Break}{break}

\pgfdeclarelayer{background}
\pgfdeclarelayer{foreground}
\pgfsetlayers{background,main,foreground}

\definecolor{yafaxiscolor}{rgb}{0.3, 0.3, 0.3}

\definecolor{yafcolor1}{rgb}{0.4, 0.165, 0.553}
\definecolor{yafcolor2}{rgb}{0.949, 0.482, 0.216}
\definecolor{yafcolor3}{rgb}{0.47, 0.549, 0.306}
\definecolor{yafcolor4}{rgb}{0.925, 0.165, 0.224}
\definecolor{yafcolor5}{rgb}{0.141, 0.345, 0.643}
\definecolor{yafcolor6}{rgb}{0.965, 0.933, 0.267}
\definecolor{yafcolor7}{rgb}{0.627, 0.118, 0.165}
\definecolor{yafcolor8}{rgb}{0.878, 0.475, 0.686}

\newlength{\yafaxispad}
\setlength{\yafaxispad}{-2pt}
\newlength{\yaftlpad}
\setlength{\yaftlpad}{\yafaxispad}
\addtolength{\yaftlpad}{-0pt}
\newlength{\yaflabelpad}
\setlength{\yaflabelpad}{-2pt}
\newlength{\yafaxiswidth}
\setlength{\yafaxiswidth}{1.2pt}
\newlength{\yafticklen}
\setlength{\yafticklen}{2pt}

\makeatletter
\def\pgfplots@drawtickgridlines@INSTALLCLIP@onorientedsurf#1{}
\makeatother

\newcommand{\yafdrawaxis}[4]{
	\pgfplotstransformcoordinatex{#1}\let\xmincoord=\pgfmathresult 
	\pgfplotstransformcoordinatex{#2}\let\xmaxcoord=\pgfmathresult 
	\pgfplotstransformcoordinatey{#3}\let\ymincoord=\pgfmathresult 
	\pgfplotstransformcoordinatey{#4}\let\ymaxcoord=\pgfmathresult 
	\pgfsetlinewidth{\yafaxiswidth} 
	\pgfsetcolor{yafaxiscolor}
	\pgfpathmoveto{\pgfpointadd{\pgfpointadd{\pgfplotspointrelaxisxy{0}{0}}{\pgfqpointxy{\xmincoord}{0}}}{\pgfqpoint{-0.5\yafaxiswidth}{\yafaxispad}}}
	\pgfpathlineto{\pgfpointadd{\pgfpointadd{\pgfplotspointrelaxisxy{0}{0}}{\pgfqpointxy{\xmaxcoord}{0}}}{\pgfqpoint{0.5\yafaxiswidth}{\yafaxispad}}}
	\pgfpathmoveto{\pgfpointadd{\pgfpointadd{\pgfplotspointrelaxisxy{0}{0}}{\pgfqpointxy{0}{\ymincoord}}}{\pgfqpoint{\yafaxispad}{-0.5\yafaxiswidth}}}
	\pgfpathlineto{\pgfpointadd{\pgfpointadd{\pgfplotspointrelaxisxy{0}{0}}{\pgfqpointxy{0}{\ymaxcoord}}}{\pgfqpoint{\yafaxispad}{0.5\yafaxiswidth}}}
	\pgfusepath{stroke}
}

\pgfplotscreateplotcyclelist{yaf}{%
{yafcolor5,mark options={scale=0.75},mark=o}, 
{yafcolor2,mark options={scale=0.75},mark=square},
{yafcolor3,mark options={scale=0.75},mark=triangle},
{yafcolor4,mark options={scale=0.75},mark=o},
{yafcolor1,mark options={scale=0.75},mark=o},
{yafcolor6,mark options={scale=0.75},mark=o},
{yafcolor7,mark options={scale=0.75},mark=o},
{yafcolor8,mark options={scale=0.75},mark=o}} 

\pgfkeys{/pgf/number format/.cd,1000 sep={\,}}

\pgfplotsset{axis y line=left, axis x line=bottom,
	tick align=outside,
	tickwidth=\yafticklen,
	clip = false,
    x axis line style= {-, line width = 0pt, color=black!0},
    y axis line style= {-, line width = 0pt, color=black!0},
    x tick style= {line width = \yafaxiswidth, color=yafaxiscolor, yshift = \yafaxispad},
    y tick style= {line width = \yafaxiswidth, color=yafaxiscolor, xshift = \yafaxispad},
    x tick label style = {font=\small, yshift = \yaftlpad, inner xsep = 0pt},
    y tick label style = {font=\small, xshift = \yaftlpad},
    every axis y label/.style = {at = {(ticklabel cs:0.5)}, rotate=90, anchor=center, font=\small, yshift = -\yaflabelpad, inner sep = 0pt},
    every axis x label/.style = {at = {(ticklabel cs:0.5)}, anchor=center, font=\small, yshift = \yaflabelpad},
    x tick label style = {font=\small, yshift = 1pt},
    grid = major,
    major grid style  = {dash pattern = on 1pt off 3 pt},
	every axis plot post/.append style= {line width=\yafaxiswidth} ,
	legend cell align = left,
	legend style = {inner sep = 0pt, cells = {font=\scriptsize}},
	legend image code/.code={%
		\draw[mark repeat=2,mark phase=2,#1] 
		plot coordinates { (0cm,0cm) (0.15cm,0cm) (0.3cm,0cm) };%
	} 
}

\newcommand{\pgfprintduration}[1]{%
	\ifthenelse{\equal{#1}{}}{---}{%
	\pgfmathsetmacro{\minutes}{floor(#1 / 60)}%
	\pgfmathsetmacro{\seconds}{#1 - 60*\minutes}%
	\pgfmathifthenelse{\minutes > 0}{"\pgfmathprintnumber{\minutes}m \pgfmathprintnumber[fixed,precision=0]{\seconds}s"}{"\pgfmathprintnumber{\seconds}s"}\pgfmathresult}}

\copyrightyear{2024}
\acmYear{2024}
\setcopyright{rightsretained}
\acmConference[KDD '24]{Proceedings of the 30th ACM SIGKDD Conference on Knowledge Discovery and Data Mining}{August 25--29, 2024}{Barcelona, Spain} 
\acmBooktitle{Proceedings of the 30th ACM SIGKDD Conference on Knowledge Discovery and Data Mining (KDD '24), August 25--29, 2024, Barcelona, Spain}
\acmDOI{10.1145/3637528.3671697} 
\acmISBN{979-8-4007-0490-1/24/08}

\makeatletter
 \gdef
\makeatother

\settopmatter{printacmref=true}

\begin{document}

\author{Chamalee Wickrama Arachchi}
\affiliation{%
  \institution{University of Helsinki}
  \city{Helsinki}
  \country{Finland}
}
\email{chamalee.wickramaarachch@helsinki.fi}

\author{Iiro Kumpulainen}
\affiliation{%
  \institution{University of Helsinki}
  \city{Helsinki}
  \country{Finland}
}
\email{iiro.kumpulainen@helsinki.fi}

\author{Nikolaj Tatti}
\affiliation{%
  \institution{HIIT, University of Helsinki}
  \city{Helsinki}
  \country{Finland}
}
\email{nikolaj.tatti@helsinki.fi}

\date{}

\title{Dense Subgraph Discovery Meets Strong Triadic Closure}

\begin{abstract}
Finding dense subgraphs is a core problem with numerous graph mining applications such as community detection in social networks and anomaly detection. However, in many real-world networks connections are not equal. One way to label edges as either strong or weak is to use strong triadic closure~(\stc). Here, if one node connects strongly with two other nodes, then those two nodes should be connected at least with a weak edge. \stc-labelings are not unique and finding the maximum number of strong edges is \np-hard. In this paper, we apply \stc to dense subgraph discovery. More formally, our score for a given subgraph is the ratio between the sum of the number of strong edges and weak edges, weighted by a user parameter $\lambda$, and the number of nodes of the subgraph. Our goal is to find a subgraph and an \stc-labeling maximizing the score.
We show that for $\lambda = 1$, our problem is equivalent to finding the densest subgraph, while for $\lambda = 0$, our problem is equivalent to finding the largest clique, making our problem \np-hard.
We propose an exact algorithm based on integer linear programming and four practical polynomial-time heuristics. We present an extensive experimental study that shows that our algorithms can find the ground truth in synthetic datasets and run efficiently in real-world datasets.
\end{abstract}

\begin{CCSXML}
<ccs2012>
   <concept>
       <concept_id>10003752.10003809.10003635</concept_id>
       <concept_desc>Theory of computation~Graph algorithms analysis</concept_desc>
       <concept_significance>500</concept_significance>
       </concept>
 </ccs2012>
\end{CCSXML}

\ccsdesc[500]{Theory of computation~Graph algorithms analysis}

\keywords{dense subgraph, strong triadic closure, integer linear programming}

\maketitle

\section{Introduction}

Many social networks naturally contain both {\em strongly} connected and {\em weakly} connected interactions among the entities of the network.
A question of particular interest is that given a set of pairwise user interactions, how to infer the strength of the social ties within the network? In other words, how to label the edges of an undirected graph as either strong or weak?

The problem of inferring the strength of social ties based on {\em strong triadic closure} principle~(\stc) has drawn attention over the past decade within the data mining community~\cite{sintos2014using,rozenshtein2017inferring,adriaens2020relaxing,oettershagen2022inferring,konstantinidis2018strong,matakos2022strengthening}. The \stc property assumes that there exist two types of social ties in the network: either {\em strong} or {\em weak}.
Let $A, B$, and $C$ be three entities in the network.
If the entities $A$ and $B$ are strongly connected with the entity $C$, then there should be at least a weak connection between $A$ and $B$. In other words, if both $A$ and $B$ are strong friends of $C$, then some kind of connection between $A$ and $B$ should also exist.
Note that these labels are not known and the goal is to infer the labels from the given unlabeled graph.

In this paper, we incorporate the \stc property into the problem of dense subgraph discovery~\cite{sintos2014using, goldberg1984finding}. More formally, given a subgraph and a weight parameter $\lambda$, we define a score as the ratio between the sum of the number of {strong} and {weak} edges weighted by $\lambda$ and the number of nodes within the subgraph.
Our objective is to find a subgraph \emph{and} a labeling that maximize our score while satisfying the \stc property within the subgraph. 

We will see that when $\lambda = 0$ finding an optimal subgraph is equal to finding a maximum clique. On the other hand, for $\lambda = 1$, finding an optimal subgraph is equal to finding the densest subgraph, that is, a subgraph $U$ maximizing the ratio of edges and nodes, $\abs{E(U)}/\abs{U}$. Both of these problems are well-studied. Optimizing the score for $0 < \lambda < 1$, yields a problem that is between these two cases. We expect that for small $\lambda$s the returned subgraph resembles a clique whereas large $\lambda$s yield a subgraph similar to the densest subgraph.

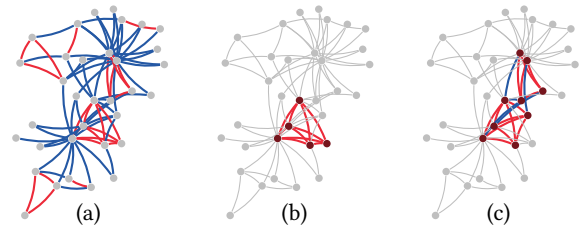
\begin{figure}
\begin{center}
\begin{tikzpicture}[scale = .6]
\tikzstyle{lm1} = [fill = yafcolor5!50!black, inner sep = 1.2pt]
\tikzstyle{lm2} = [circle, fill = yafcolor4!50!black, inner sep = 1pt]
\tikzstyle{lm3} = [circle, fill = gray!50, inner sep = 1pt]
\tikzstyle{ll} = []
\tikzstyle{le1} = [yafcolor5, thick, bend left = 10]
\tikzstyle{le2} = [yafcolor4, thick, bend left = 10]
\tikzstyle{le3} = [gray!50, bend left = 10]
\node[lm3] (n0) at (-0.39521, -0.90369) {};
\node[lm3] (n1) at (-0.1494, -0.63131) {};
\node[lm3] (n2) at (0.089756, -0.058231) {};
\node[lm3] (n3) at (0.32331, -1.0532) {};
\node[lm3] (n4) at (0.023263, -1.9859) {};
\node[lm3] (n5) at (-1.2108, -1.6373) {};
\node[lm3] (n6) at (-0.73841, -1.9544) {};
\node[lm3] (n7) at (0.69671, -1.0172) {};
\node[lm3] (n8) at (0.45414, -0.064462) {};
\node[lm3] (n9) at (1.35, 0.053898) {};
\node[lm3] (n10) at (-0.47034, -1.7978) {};
\node[lm3] (n11) at (-1.6432, -0.89285) {};
\node[lm3] (n12) at (0.52145, -1.7831) {};
\node[lm3] (n13) at (0.59789, -0.39239) {};
\node[lm3] (n14) at (1.15, 1.6278) {};
\node[lm3] (n15) at (1.644, 0.73643) {};
\node[lm3] (n16) at (-1.4464, -2.6101) {};
\node[lm3] (n17) at (-1.278, -0.60231) {};
\node[lm3] (n18) at (0.51654, 1.9636) {};
\node[lm3] (n19) at (-0.30691, 0.035621) {};
\node[lm3] (n20) at (0.10109, 1.891) {};
\node[lm3] (n21) at (-1.1086, -1.0869) {};
\node[lm3] (n22) at (1.4907, 1.075) {};
\node[lm3] (n23) at (-0.2767, 1.6423) {};
\node[lm3] (n24) at (-1.5541, 0.77017) {};
\node[lm3] (n25) at (-1.2836, 1.3377) {};
\node[lm3] (n26) at (1.5531, 1.5267) {};
\node[lm3] (n27) at (-0.5343, 0.9102) {};
\node[lm3] (n28) at (-0.16321, 0.83704) {};
\node[lm3] (n29) at (0.70218, 1.7305) {};
\node[lm3] (n30) at (0.93024, 0.14795) {};
\node[lm3] (n31) at (-0.5944, 0.37166) {};
\node[lm3] (n32) at (0.42538, 0.99008) {};
\node[lm3] (n33) at (0.58393, 0.82341) {};
\begin{pgfonlayer}{background}
\draw (n0) edge[le2] (n1);
\draw (n0) edge[le2] (n2);
\draw (n0) edge[le2] (n3);
\draw (n0) edge[le1] (n4);
\draw (n0) edge[le1] (n5);
\draw (n0) edge[le1] (n6);
\draw (n0) edge[le2] (n7);
\draw (n0) edge[le1] (n8);
\draw (n0) edge[le1] (n10);
\draw (n0) edge[le1] (n11);
\draw (n0) edge[le1] (n12);
\draw (n0) edge[le1] (n13);
\draw (n0) edge[le1] (n17);
\draw (n0) edge[le1] (n19);
\draw (n0) edge[le1] (n21);
\draw (n0) edge[le1] (n31);
\draw (n1) edge[le2] (n2);
\draw (n1) edge[le2] (n3);
\draw (n1) edge[le2] (n7);
\draw (n1) edge[le1] (n13);
\draw (n1) edge[le1] (n17);
\draw (n1) edge[le1] (n19);
\draw (n1) edge[le1] (n21);
\draw (n1) edge[le1] (n30);
\draw (n2) edge[le2] (n3);
\draw (n2) edge[le2] (n7);
\draw (n2) edge[le1] (n8);
\draw (n2) edge[le1] (n9);
\draw (n2) edge[le1] (n13);
\draw (n2) edge[le1] (n27);
\draw (n2) edge[le1] (n28);
\draw (n2) edge[le1] (n32);
\draw (n3) edge[le2] (n7);
\draw (n3) edge[le1] (n12);
\draw (n3) edge[le1] (n13);
\draw (n4) edge[le1] (n6);
\draw (n4) edge[le2] (n10);
\draw (n5) edge[le2] (n6);
\draw (n5) edge[le1] (n10);
\draw (n5) edge[le2] (n16);
\draw (n6) edge[le2] (n16);
\draw (n8) edge[le2] (n30);
\draw (n8) edge[le2] (n32);
\draw (n8) edge[le2] (n33);
\draw (n9) edge[le1] (n33);
\draw (n13) edge[le1] (n33);
\draw (n14) edge[le1] (n32);
\draw (n14) edge[le1] (n33);
\draw (n15) edge[le1] (n32);
\draw (n15) edge[le1] (n33);
\draw (n18) edge[le1] (n32);
\draw (n18) edge[le1] (n33);
\draw (n19) edge[le1] (n33);
\draw (n20) edge[le1] (n32);
\draw (n20) edge[le1] (n33);
\draw (n22) edge[le1] (n32);
\draw (n22) edge[le1] (n33);
\draw (n23) edge[le1] (n25);
\draw (n23) edge[le2] (n27);
\draw (n23) edge[le1] (n29);
\draw (n23) edge[le1] (n32);
\draw (n23) edge[le1] (n33);
\draw (n24) edge[le2] (n25);
\draw (n24) edge[le1] (n27);
\draw (n24) edge[le2] (n31);
\draw (n25) edge[le2] (n31);
\draw (n26) edge[le2] (n29);
\draw (n26) edge[le1] (n33);
\draw (n27) edge[le1] (n33);
\draw (n28) edge[le1] (n31);
\draw (n28) edge[le1] (n33);
\draw (n29) edge[le1] (n32);
\draw (n29) edge[le1] (n33);
\draw (n30) edge[le2] (n32);
\draw (n30) edge[le2] (n33);
\draw (n31) edge[le1] (n32);
\draw (n31) edge[le1] (n33);
\draw (n32) edge[le2] (n33);
\end{pgfonlayer}

\node[anchor=south west] at (-0.5, -3) {(a)};
\end{tikzpicture}
\hspace{0.5cm}
\begin{tikzpicture}[scale = .6]
\tikzstyle{lm1} = [fill = yafcolor3!50!black, inner sep = 1.2pt]
\tikzstyle{lm2} = [circle, fill = yafcolor4!50!black, inner sep = 1pt]
\tikzstyle{lm3} = [circle, fill = gray!50, inner sep = 1pt]
\tikzstyle{ll} = []
\tikzstyle{le1} = [yafcolor4, thick, bend left = 10]
\tikzstyle{le2} = [yafcolor5, thick, bend left = 10]
\tikzstyle{le3} = [gray!50, bend left = 10]
\node[lm2] (n0) at (-0.39521, -0.90369) {};
\node[lm2] (n1) at (-0.1494, -0.63131) {};
\node[lm2] (n2) at (0.089756, -0.058231) {};
\node[lm2] (n3) at (0.32331, -1.0532) {};
\node[lm3] (n4) at (0.023263, -1.9859) {};
\node[lm3] (n5) at (-1.2108, -1.6373) {};
\node[lm3] (n6) at (-0.73841, -1.9544) {};
\node[lm2] (n7) at (0.69671, -1.0172) {};
\node[lm3] (n8) at (0.45414, -0.064462) {};
\node[lm3] (n9) at (1.35, 0.053898) {};
\node[lm3] (n10) at (-0.47034, -1.7978) {};
\node[lm3] (n11) at (-1.6432, -0.89285) {};
\node[lm3] (n12) at (0.52145, -1.7831) {};
\node[lm3] (n13) at (0.59789, -0.39239) {};
\node[lm3] (n14) at (1.15, 1.6278) {};
\node[lm3] (n15) at (1.644, 0.73643) {};
\node[lm3] (n16) at (-1.4464, -2.6101) {};
\node[lm3] (n17) at (-1.278, -0.60231) {};
\node[lm3] (n18) at (0.51654, 1.9636) {};
\node[lm3] (n19) at (-0.30691, 0.035621) {};
\node[lm3] (n20) at (0.10109, 1.891) {};
\node[lm3] (n21) at (-1.1086, -1.0869) {};
\node[lm3] (n22) at (1.4907, 1.075) {};
\node[lm3] (n23) at (-0.2767, 1.6423) {};
\node[lm3] (n24) at (-1.5541, 0.77017) {};
\node[lm3] (n25) at (-1.2836, 1.3377) {};
\node[lm3] (n26) at (1.5531, 1.5267) {};
\node[lm3] (n27) at (-0.5343, 0.9102) {};
\node[lm3] (n28) at (-0.16321, 0.83704) {};
\node[lm3] (n29) at (0.70218, 1.7305) {};
\node[lm3] (n30) at (0.93024, 0.14795) {};
\node[lm3] (n31) at (-0.5944, 0.37166) {};
\node[lm3] (n32) at (0.42538, 0.99008) {};
\node[lm3] (n33) at (0.58393, 0.82341) {};
\begin{pgfonlayer}{background}

\draw (n0) edge[le1] (n1);
\draw (n0) edge[le1] (n2);
\draw (n0) edge[le1] (n3);

\draw (n1) edge[le1] (n2);
\draw (n1) edge[le1] (n3);
\draw (n2) edge[le1] (n3);

\draw (n0) edge[le1] (n7);
\draw (n1) edge[le1] (n7);
\draw (n2) edge[le1] (n7);

\draw (n3) edge[le1] (n7);

\draw (n0) edge[le3] (n13);
\draw (n2) edge[le3] (n13);
\draw (n1) edge[le3] (n13);
\draw (n3) edge[le3] (n13);
\draw (n0) edge[le3] (n4);
\draw (n0) edge[le3] (n5);
\draw (n0) edge[le3] (n6);

\draw (n0) edge[le3] (n8);
\draw (n0) edge[le3] (n10);
\draw (n0) edge[le3] (n11);
\draw (n0) edge[le3] (n12);
\draw (n0) edge[le3] (n17);
\draw (n0) edge[le3] (n19);
\draw (n0) edge[le3] (n21);
\draw (n0) edge[le3] (n31);

\draw (n1) edge[le3] (n17);
\draw (n1) edge[le3] (n19);
\draw (n1) edge[le3] (n21);
\draw (n1) edge[le3] (n30);

\draw (n2) edge[le3] (n8);
\draw (n2) edge[le3] (n9);
\draw (n2) edge[le3] (n27);
\draw (n2) edge[le3] (n28);
\draw (n2) edge[le3] (n32);

\draw (n3) edge[le3] (n12);
\draw (n4) edge[le3] (n6);
\draw (n4) edge[le3] (n10);
\draw (n5) edge[le3] (n6);
\draw (n5) edge[le3] (n10);
\draw (n5) edge[le3] (n16);
\draw (n6) edge[le3] (n16);
\draw (n8) edge[le3] (n30);
\draw (n8) edge[le3] (n32);
\draw (n8) edge[le3] (n33);
\draw (n9) edge[le3] (n33);
\draw (n13) edge[le3] (n33);
\draw (n14) edge[le3] (n32);
\draw (n14) edge[le3] (n33);
\draw (n15) edge[le3] (n32);
\draw (n15) edge[le3] (n33);
\draw (n18) edge[le3] (n32);
\draw (n18) edge[le3] (n33);
\draw (n19) edge[le3] (n33);
\draw (n20) edge[le3] (n32);
\draw (n20) edge[le3] (n33);
\draw (n22) edge[le3] (n32);
\draw (n22) edge[le3] (n33);
\draw (n23) edge[le3] (n25);
\draw (n23) edge[le3] (n27);
\draw (n23) edge[le3] (n29);
\draw (n23) edge[le3] (n32);
\draw (n23) edge[le3] (n33);
\draw (n24) edge[le3] (n25);
\draw (n24) edge[le3] (n27);
\draw (n24) edge[le3] (n31);
\draw (n25) edge[le3] (n31);
\draw (n26) edge[le3] (n29);
\draw (n26) edge[le3] (n33);
\draw (n27) edge[le3] (n33);
\draw (n28) edge[le3] (n31);
\draw (n28) edge[le3] (n33);
\draw (n29) edge[le3] (n32);
\draw (n29) edge[le3] (n33);
\draw (n30) edge[le3] (n32);
\draw (n30) edge[le3] (n33);
\draw (n31) edge[le3] (n32);
\draw (n31) edge[le3] (n33);
\draw (n32) edge[le3] (n33);
\end{pgfonlayer}


\node[anchor=south west] at (-0.5, -3) {(b)};
\end{tikzpicture}
\hspace{0.5cm}
\begin{tikzpicture}[scale = .6]
\tikzstyle{lm1} = [fill = yafcolor3!50!black, inner sep = 1.2pt]
\tikzstyle{lm2} = [circle, fill = yafcolor4!50!black, inner sep = 1pt]
\tikzstyle{lm3} = [circle, fill = gray!50, inner sep = 1pt]
\tikzstyle{ll} = []
\tikzstyle{le1} = [yafcolor4, thick, bend left = 10]
\tikzstyle{le2} = [yafcolor5, thick, bend left = 10]
\tikzstyle{le3} = [gray!50, bend left = 10]
\node[lm2] (n0) at (-0.39521, -0.90369) {};
\node[lm2] (n1) at (-0.1494, -0.63131) {};
\node[lm2] (n2) at (0.089756, -0.058231) {};
\node[lm2] (n3) at (0.32331, -1.0532) {};
\node[lm3] (n4) at (0.023263, -1.9859) {};
\node[lm3] (n5) at (-1.2108, -1.6373) {};
\node[lm3] (n6) at (-0.73841, -1.9544) {};
\node[lm3] (n7) at (0.69671, -1.0172) {};
\node[lm2] (n8) at (0.45414, -0.064462) {};
\node[lm3] (n9) at (1.35, 0.053898) {};
\node[lm3] (n10) at (-0.47034, -1.7978) {};
\node[lm3] (n11) at (-1.6432, -0.89285) {};
\node[lm3] (n12) at (0.52145, -1.7831) {};
\node[lm2] (n13) at (0.59789, -0.39239) {};
\node[lm3] (n14) at (1.15, 1.6278) {};
\node[lm3] (n15) at (1.644, 0.73643) {};
\node[lm3] (n16) at (-1.4464, -2.6101) {};
\node[lm3] (n17) at (-1.278, -0.60231) {};
\node[lm3] (n18) at (0.51654, 1.9636) {};
\node[lm3] (n19) at (-0.30691, 0.035621) {};
\node[lm3] (n20) at (0.10109, 1.891) {};
\node[lm3] (n21) at (-1.1086, -1.0869) {};
\node[lm3] (n22) at (1.4907, 1.075) {};
\node[lm3] (n23) at (-0.2767, 1.6423) {};
\node[lm3] (n24) at (-1.5541, 0.77017) {};
\node[lm3] (n25) at (-1.2836, 1.3377) {};
\node[lm3] (n26) at (1.5531, 1.5267) {};
\node[lm3] (n27) at (-0.5343, 0.9102) {};
\node[lm3] (n28) at (-0.16321, 0.83704) {};
\node[lm3] (n29) at (0.70218, 1.7305) {};
\node[lm2] (n30) at (0.93024, 0.14795) {};
\node[lm3] (n31) at (-0.5944, 0.37166) {};
\node[lm2] (n32) at (0.42538, 0.99008) {};
\node[lm2] (n33) at (0.58393, 0.82341) {};
\begin{pgfonlayer}{background}


\draw (n0) edge[le1] (n1);
\draw (n0) edge[le1] (n2);
\draw (n0) edge[le1] (n3);
\draw (n0) edge[le1] (n13);

\draw (n1) edge[le1] (n2);
\draw (n1) edge[le1] (n3);
\draw (n1) edge[le1] (n13);

\draw (n2) edge[le1] (n3);
\draw (n2) edge[le1] (n13);

\draw (n3) edge[le1] (n13);

\draw (n8) edge[le1] (n32);
\draw (n8) edge[le1] (n30);
\draw (n8) edge[le1] (n33);

\draw (n30) edge[le1] (n33);
\draw (n30) edge[le1] (n32);
\draw (n32) edge[le1] (n33);


\draw (n0) edge[le2] (n8);
\draw (n1) edge[le2] (n30);
\draw (n2) edge[le2] (n8);
\draw (n2) edge[le2] (n32);
\draw (n13) edge[le2] (n33);


\draw (n0) edge[le3] (n31);
\draw (n2) edge[le3] (n28);
\draw (n28) edge[le3] (n33);
\draw (n31) edge[le3] (n32);
\draw (n31) edge[le3] (n33);
\draw (n28) edge[le3] (n31);

\draw (n0) edge[le3] (n4);
\draw (n0) edge[le3] (n5);
\draw (n0) edge[le3] (n6);
\draw (n0) edge[le3] (n7);

\draw (n0) edge[le3] (n10);
\draw (n0) edge[le3] (n11);
\draw (n0) edge[le3] (n12);
\draw (n0) edge[le3] (n17);
\draw (n0) edge[le3] (n19);
\draw (n0) edge[le3] (n21);
\draw (n1) edge[le3] (n7);
\draw (n1) edge[le3] (n17);
\draw (n1) edge[le3] (n19);
\draw (n1) edge[le3] (n21);

\draw (n2) edge[le3] (n7);

\draw (n2) edge[le3] (n9);

\draw (n2) edge[le3] (n27);

\draw (n3) edge[le3] (n7);
\draw (n3) edge[le3] (n12);
\draw (n4) edge[le3] (n6);
\draw (n4) edge[le3] (n10);
\draw (n5) edge[le3] (n6);
\draw (n5) edge[le3] (n10);
\draw (n5) edge[le3] (n16);
\draw (n6) edge[le3] (n16);

\draw (n9) edge[le3] (n33);

\draw (n14) edge[le3] (n32);
\draw (n14) edge[le3] (n33);
\draw (n15) edge[le3] (n32);
\draw (n15) edge[le3] (n33);
\draw (n18) edge[le3] (n32);
\draw (n18) edge[le3] (n33);
\draw (n19) edge[le3] (n33);
\draw (n20) edge[le3] (n32);
\draw (n20) edge[le3] (n33);
\draw (n22) edge[le3] (n32);
\draw (n22) edge[le3] (n33);
\draw (n23) edge[le3] (n25);
\draw (n23) edge[le3] (n27);
\draw (n23) edge[le3] (n29);
\draw (n23) edge[le3] (n32);
\draw (n23) edge[le3] (n33);
\draw (n24) edge[le3] (n25);
\draw (n24) edge[le3] (n27);
\draw (n24) edge[le3] (n31);
\draw (n25) edge[le3] (n31);
\draw (n26) edge[le3] (n29);
\draw (n26) edge[le3] (n33);
\draw (n27) edge[le3] (n33);

\draw (n29) edge[le3] (n32);
\draw (n29) edge[le3] (n33);

\end{pgfonlayer}











\node[anchor=south west] at (-0.5, -3) {(c)};
\end{tikzpicture}

\caption{
Strong~(Red) and weak~(Blue) edges of the Karate club dataset maximizing the number of strong edges (a), $\lambda = 0$ (b), and $\lambda = 0.5$ (c) using our integer linear program based algorithm~(\algip). 
We define our score as the sum of the number of strong and weak edges weighted by a parameter $\lambda$, divided by the size of the subgraph.
The scores are $2.0$ and $2.06$ for (b) and (c), respectively. We see that (b) is a clique of size $5$.}
\label{fig:karate}
\end{center}
\end{figure}

\textbf{Example:} 
We illustrate the difference between our problem and the original STC problem considered by \citet{sintos2014using} in Figure~\ref{fig:karate}.
The goal of this paper is to find a subgraph that maximizes our score while satisfying the \stc property. In contrast, \citet{sintos2014using} aims to label \emph{all} the edges in the graph such that the number of weak edges is minimized.
Figure~\ref{fig:karate} shows the results obtained with the Karate club dataset with our exact algorithm. We should stress that the labeling of the discovered \emph{subgraph} might be different from the labeling that maximizes strong edges for the whole graph. In Figure~\ref{fig:karate}~(c), we see that $4$ weak edges have been turned into strong while the labeling of the remaining edges is unchanged.

We show that our problem is \np-hard when $\lambda < 1$, and even inapproximable when $\lambda = 0$ since our problem then reduces to the \prbmaxclique problem. However, the $\lambda =  1$ case is solvable in polynomial time. To solve the problem,
we first propose an exact algorithm based on integer linear programming which runs in exponential time.  We consider four other heuristics that run in polynomial time in the size of the input graph:
We propose a linear linear programming based heuristic in Section~\ref{sec:lp}
and a greedy algorithm in Section~\ref{sec:greedy}.
We also propose two straightforward algorithms that combine the existing algorithms for solving the densest subgraph problem and finding STC-compliant labeling in an entire graph.

The remainder of the paper is organized as follows.
First, we introduce preliminary notation and our problem in Section~\ref{sec:prel}.
Next, we present the related work in Section~\ref{sec:related}.
Next, we show that our problem is \np-hard in Section~\ref{sec:compl} and then explain our algorithms in Section~\ref{sec:algorithm}.
Finally, we present our experimental evaluation in Section~\ref{sec:exp} and conclude the paper with a discussion in Section~\ref{sec:conclusions}.
\section{Preliminary notation and problem definition}\label{sec:prel}

We begin by providing preliminary notation and formally defining our problem.

Our input is an unweighted graph  $G = (V, E)$, and we denote the number of nodes and edges by $n = \abs{V}$ and  $m = \abs{E}$. 
Given a graph $G = (V, E)$ and a set of nodes $U \subseteq V$, we define $E(U) \subseteq E$ to be the subset of edges having both endpoints in $U$. We denote the degree of vertex $v$ as $\deg(v)$. We denote the set of adjacent edges of vertex $v$ as $N(v)$.  

We want to label the set of edges $E$ as either {\em strong} or {\em weak}. 
To perform the labeling,  we use the \emph{strong triadic closure} ({\stc}) property~\cite{sintos2014using}.  We say that the graph is \stc satisfied if for any given triplet of vertices, $(x, y, z)$ the following holds: if  $(x, y)$ and $(y, z)$ are connected and are labeled as strongly connected, then the edge $(x, z)$  always exists at least as a  weak edge. 

We call a triplet of vertices $(x, y, z)$  a {\em wedge}, if $(x, y) \in E$, $(y, z) \in E$, and $(x, z) \notin E$.
A \emph{wedge graph} $Z(G)$ consists of the edges of the graph $G$ that contribute to at least one wedge as its vertices. If the two edges $e_1$ and $e_2$ of $G$ form a wedge, we add an edge between the two nodes in $Z(G)$ that corresponds to $e_1$ and $e_2$, that is, each edge of $Z(G)$ corresponds to a wedge of $G$.

Given a graph $G = (V, E)$ and a labeling $L$ of the edges as {strong} or  {weak}, we write $E_s(U, L)$ and  $E_w(U, L)$ to be the set of {strong} and {weak} edges of the graph induced by a set of vertices $U$.
We also write $m_s(U, L) = \abs{E_s(U, L)}$ and $m_w(U, L) = \abs{E_w(U, L)}$.
Finally, for a vertex $u \in U$, we define a strong and weak degree, $\deg_s(u, U, L)$ and $\deg_w(u, U, L)$ to be the number of strong or weak edges in $U$ adjacent to $u$. We may omit $L$ or $U$ from the notation when it is clear from the context.

Given a weight parameter $\lambda$ where $\lambda \in [0,\;1]$ and a label assignment $L$, we define a score
\[
	\score{U, L; \lambda} =\frac{m_s(U, L) + \lambda m_w(U, L)} {\abs{U}} \quad .
\]
We may omit $L$ or $\lambda$ when it is clear from the context.

We consider the following optimization problem. 

\begin{problem}[\prbstrwk]
\label{pr:label-subgrap-str-wk}
Given a graph $G = (V,E)$ and a weight parameter $\lambda$, find a subset of vertices $U \subseteq V$ and a labeling $L$ of the edges as {strong} or {weak}  such that the \stc property is satisfied in $(U, E(U))$ and  $\score{U, L; \lambda}$ is maximized.
\end{problem}

Note that when $\lambda = 1$ then the labeling does not matter, and \prbstrwk reduces to dense subgraph discovery, that is, finding $U$ with the largest ratio $\abs{E(U)}/\abs{U}$. On the other hand, if $\lambda = 0$, then we only take into account strong edges; we will show in Section~\ref{sec:compl} that in this case, \prbstrwk is equal to finding a maximum clique.

Note also that the labeling depends on the underlying subgraph $U$, that is, we need to find $U$ and $L$ simultaneously.
\section{Related work}\label{sec:related}

\textbf{The strong triadic closure (\stc) property:}
As an early work in this line of research, \citet{sintos2014using} considered the problem of minimizing the number of weak edges (analogously maximizing the number of strong edges) while labeling the edges compliant with the \stc property.
We will refer to this optimization problem as \prbminSTC.

\citet{sintos2014using} showed
that \prbminSTC is equivalent to solving a minimum vertex cover, which we denote by \prbcovermin, in a wedge graph $Z(G)$.
In \prbcovermin, we search for a minimum number of nodes such that at least one endpoint of each edge is in the set.

\citet{sintos2014using} proposed the following algorithm for \prbminSTC, which we denote by \algminstc.
Given a graph $G$ they first construct the wedge graph $Z(G)$ and find its vertex cover. Next, they label the set of edges of $G$ that corresponds to the set of vertices in the vertex cover as weak, and the remaining edges as strong. Since \prbcovermin is \np-hard, they approximate it with a maximal matching algorithm~\citep{clarkson1983modification}. The algorithm picks an arbitrary edge and adds both endpoints of the edge to the cover, and the edges incident to both endpoints are deleted. It continues until no edges are left. This algorithm outputs a {\em maximal matching} which is known to yield a $2$-approximation since at least one endpoint should be in the cover.

The problem of finding a labeling of edges that maximizes the number of strong edges while satisfying the \stc property is \np-hard for general graphs \cite{sintos2014using} and split graphs \cite{konstantinidis2020maximizing}. 
Nevertheless, it becomes polynomial-time solvable for proper interval
graphs~\cite{konstantinidis2020maximizing}, cographs~\cite{konstantinidis2018strong}, and trivially perfect graphs~\cite{konstantinidis2020maximizing}.
Given communities, \citet{rozenshtein2017inferring} considered the problem of inferring strengths while minimizing \stc violations with additional connectivity constraints.
\citet{oettershagen2022inferring} extended the idea of inferring tie strength for temporal networks.
\citet{matakos2022strengthening} considered the problem of strengthening edges to maximize \stc violations, which they consider as opportunities to build new connections.
\citet{adriaens2020relaxing} formulated both minimization and maximization versions of \stc problems as linear programs.

\textbf{The dense subgraph discovery:}
Finding dense subgraphs is a core problem in social network analysis. 
Given a graph, the densest subgraph problem is defined as finding a subgraph with the highest average degree density~(twice the number of edges divided by the number of nodes).
Finding the densest subgraph for a single graph has been extensively studied~\cite{goldberg1984finding,charikar2000greedy,khuller2009finding}.
\citet{goldberg1984finding} proposed an exact,
polynomial time algorithm that solves a sequence of min-cut instances. We will refer to this algorithm as \algdg. \citet{asahiro2000greedily} provided a greedy algorithm and 
 \citet{charikar2000greedy} proved that their greedy algorithm gives a $1/2$-approximation, showed how to implement the algorithm using Fibonacci heaps, and devised a linear-programming formulation of the problem. The idea of the greedy algorithm is that at each iteration, a vertex with the minimum degree is removed, and then the densest
subgraph among all the produced subgraphs is returned as the solution. We will refer to this algorithm as \algdc. It has also been extended for multiple graph snapshots~\cite{jethava2015finding,semertzidis2019finding,arachchi2023jaccard}. The problem has been studied in a streaming setting~\cite{bhattacharya2015space}. 
To the best of our knowledge, this is the first attempt to consider the notion of density together with the \stc property.

In addition to degree density (a.k.a average degree), alternative types of density measures have also been considered previously such as triangle density and $k$-clique density~\cite{tsourakakis2015k}.
Triangle density is defined as the ratio between the number of triangles and the number of vertices of the subgraph.
The definition that will be used in this paper is the ratio between the number of induced edges and nodes.
Adopting our problem to other density measures is left open as future work.

\section{Computational complexity}\label{sec:compl}
In this section, we analyze the computational complexity of the \prbstrwk problem by showing its \np-hardness when $\lambda < 1$ and its connection to the \prbmaxclique problem, when $\lambda = 0$. 
At the other extreme, when $\lambda = 1$, \prbstrwk is equivalent to the problem of finding the densest subgraph, which can be solved in polynomial time using the algorithm presented by~\citet{goldberg1984finding}.
 
\begin{proposition}
\label{prop:np_hardness}
For $\lambda=0$, \prbstrwk is \np-hard.
\end{proposition}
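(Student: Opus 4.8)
The plan is to prove \np-hardness by reducing from \prbmaxclique, i.e. from computing the clique number $\omega$ of $G$, which the excerpt has already identified as the relevant hard problem. Concretely, I would show that for $\lambda = 0$ the optimal value of \prbstrwk is exactly $(\omega - 1)/2$, where $\omega$ is the size of a maximum clique of $G$. Since this value can be obtained in polynomial time from any exact solver for \prbstrwk, and since it determines $\omega$ via $\omega = 2q^* + 1$ (with $q^*$ the optimum), an efficient solver would yield an efficient computation of $\omega$, establishing \np-hardness.

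First I would establish the lower bound. Take a maximum clique $C$ of $G$, set $U = C$, and label every edge of $E(U)$ strong. A clique contains no wedge, since every pair of its vertices is adjacent, so the \stc property holds trivially. This labeling has $m_s(U, L) = \binom{\abs{C}}{2}$ strong edges and $\abs{U} = \omega$, giving $\score{U, L; 0} = \binom{\omega}{2}/\omega = (\omega - 1)/2$, so the optimum is at least $(\omega-1)/2$.

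The crux is the matching upper bound: no choice of $U$ and \stc-valid labeling $L$ can exceed $(\omega - 1)/2$. The key observation is that \stc forces the strong neighborhood of every vertex to be a clique in $G$: if $(v, x)$ and $(v, z)$ are both strong, then $(x, z) \in E$, so any two strong neighbors of $v$ are adjacent, and hence $\{v\} \cup \{\text{strong neighbors of } v\}$ induces a clique in $G$. Consequently $\deg_s(v, U, L) + 1 \le \omega$ for every $v \in U$. Summing over $U$ and applying the handshake identity gives $2m_s(U, L) = \sum_{v \in U} \deg_s(v, U, L) \le (\omega - 1)\abs{U}$, i.e. $\score{U, L; 0} \le (\omega - 1)/2$.

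I expect the upper bound to be the main obstacle, since it must rule out every subgraph-and-labeling combination rather than only cliques; the essential insight that makes it go through is the per-vertex strong-degree bound $\deg_s(v,U,L) \le \omega - 1$ derived directly from the \stc constraint. Combining the two bounds shows the optimum equals $(\omega - 1)/2$, so an efficient algorithm for \prbstrwk with $\lambda = 0$ would compute $\omega$ efficiently, contradicting the \np-hardness of \prbmaxclique.
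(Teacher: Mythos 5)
Your proof is correct and takes essentially the same approach as the paper: both reduce from \prbmaxclique using the key observation that the \stc property forces the strong neighborhood of any vertex to induce a clique, combined with an averaging argument over strong degrees. The only cosmetic difference is that you cap every vertex's strong degree by $\omega-1$ to bound the optimal \emph{value}, whereas the paper picks the vertex of maximum strong degree and extracts from it a clique whose score matches the optimum; these are two readings of the same inequality.
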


The proof (given in Appendix~\ref{appendix:complexity}) also shows that the maximum clique 
yields precisely the optimal score 
while satisfying the \stc property. As a consequence, combining this with the inapproximability results for \prbmaxclique~\cite{zuckerman2006linear} gives the next result (see Appendix~\ref{appendix:complexity} for the proof).

\begin{proposition}
\label{prop:inapproximability}
For $\lambda=0$, \prbstrwk does not have any polynomial time approximation algorithm with an approximation ratio better than $n^{1-\epsilon}$ for any constant $\epsilon > 0$, unless $\p = \np$.
\end{proposition}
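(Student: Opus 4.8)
The plan is to convert a hypothetical approximation algorithm for \prbstrwk (with $\lambda = 0$) into an approximation algorithm for \prbmaxclique on the \emph{same} vertex set, and then invoke the known $n^{1-\epsilon}$ inapproximability of \prbmaxclique \cite{zuckerman2006linear}. Since the reduction is essentially the identity map on graphs, the parameter $n$ is preserved and the hardness factor transfers verbatim; the real work is to relate the optimal score to the clique number $\omega(G)$ and, crucially, to show that a good \emph{feasible solution} (not merely the optimal value) can be rounded into a large clique in polynomial time.

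First I would record the value identity already established in the proof of Proposition~\ref{prop:np_hardness}: for $\lambda = 0$ only strong edges contribute, and a clique of size $k$ labeled entirely strong contains no wedge, so it trivially satisfies the \stc property and attains score $(k-1)/2$. Hence the optimum is at least $(\omega(G)-1)/2$.

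The key step is a rounding lemma. Given any feasible pair $(U, L)$, let $v \in U$ be a vertex of maximum strong degree $d = \deg_s(v, U, L)$. Because $L$ satisfies \stc inside $U$, any two strong neighbors $a, b$ of $v$ cannot leave the edges $(a,v)$ and $(v,b)$ as a wedge, so $(a,b) \in E$; thus $\{v\} \cup N_s(v)$ is a clique of $G$ of size $d+1$, found in polynomial time. Since $d$ is at least the average strong degree $2 m_s(U,L)/\abs{U}$, this clique has size at least $2\,\score{U,L;0} + 1$. In particular, taking the optimal $(U,L)$ shows the optimum is also at most $(\omega(G)-1)/2$, so the two bounds meet.

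Finally I would chain the inequalities. If a polynomial-time algorithm returns $(U,L)$ with score at least $\frac{1}{\beta}\cdot\frac{\omega(G)-1}{2}$, the rounding lemma yields a clique of size at least $\frac{\omega(G)-1}{\beta} + 1 \ge \frac{\omega(G)}{\beta}$ (using $\beta \ge 1$), i.e. a $\beta$-approximation for \prbmaxclique. A $\beta$-approximation for \prbstrwk with $\beta < n^{1-\epsilon}$ would therefore contradict the inapproximability of \prbmaxclique unless $\p = \np$. I expect the main obstacle to be not the arithmetic but verifying the rounding step cleanly: one must confirm that the \stc constraint genuinely forces the strong neighborhood of $v$ into a clique of $G$, and that the additive $-1$ and the factor $2$ are absorbed without eroding the $n^{1-\epsilon}$ factor, which holds precisely because the reduction keeps $n$ unchanged and the offsets only help once $\beta \ge 1$.
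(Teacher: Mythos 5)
Your proposal is correct and follows essentially the same route as the paper: both arguments take the vertex of maximum strong degree, use the \stc constraint to turn its strong neighborhood into a clique of size at least $2\,\score{U,L;0}+1$, and absorb the additive offsets via $\frac{\omega(G)-1}{\beta}+1 \geq \frac{\omega(G)}{\beta}$ for $\beta \geq 1$ so that the $n^{1-\epsilon}$ hardness of \prbmaxclique transfers unchanged. The only cosmetic difference is that you state the extraction of a large clique from an arbitrary feasible solution as an explicit rounding lemma, whereas the paper reuses the inequality derived inside the proof of Proposition~\ref{prop:np_hardness}.
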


Note that when $\lambda > 0$, we can obtain a $\frac{1}{\lambda}$-approximation by finding the densest subgraph with nodes $U$ and using a labeling $L$ that labels each edge as weak. Compared to an optimal solution $U^*$ with labeling $L^*$, we then obtain the score 
\begin{equation*}
\begin{split}
q(U, L; \lambda) & = \frac{\lambda\abs{E(U)}}{\abs{U}} \geq \frac{\lambda\abs{E(U^*)}}{\abs{U^*}} = \lambda\frac{m_s(U^*, L^*)+m_w(U^*, L^*)}{\abs{U^*}} \\
& \geq \lambda\frac{m_s(U^*, L^*)+\lambda m_w(U^*, L^*)}{\abs{U^*}} = \lambda q(U^*, L^*; \lambda).
\end{split}
\end{equation*}

In summary, \prbstrwk is inapproximable when $\lambda = 0$ but solvable in polynomial time when $\lambda = 1$. Finally, we state that \prbstrwk is also \np-hard for $0 < \lambda < 1$.

\begin{proposition}
\label{prop:nphard2}
\prbstrwk is \np-hard for $0 < \lambda < 1$.
\end{proposition}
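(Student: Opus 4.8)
The plan is to first reduce the objective to a purely combinatorial quantity over vertex sets, and then reduce from a bounded-degree \np-hard problem whose hard core is encoded in the induced wedge graphs.

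\textbf{Structural simplification.} For a fixed $U$, since $\lambda < 1$ a strong edge is worth strictly more than a weak edge, so an optimal labeling makes as many edges strong as the \stc property permits. Writing $m_s^*(U)$ for the maximum number of strong edges over all \stc-labelings of the subgraph induced by $U$, the best score attainable on $U$ is
\[ \score{U;\lambda} = \frac{\lambda\abs{E(U)} + (1-\lambda)\,m_s^*(U)}{\abs{U}}. \]
Invoking the equivalence of \citet{sintos2014using} between maximizing strong edges and \prbcovermin on the wedge graph, we have $m_s^*(U) = \abs{E(U)} - \tau(Z(G[U]))$, where $\tau(\cdot)$ is the minimum vertex cover size and $Z(G[U])$ is the wedge graph of the subgraph induced by $U$. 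Hence
\[ \score{U;\lambda} = \frac{\abs{E(U)} - (1-\lambda)\,\tau(Z(G[U]))}{\abs{U}}, \]
so \prbstrwk amounts to maximizing this ratio over $U$. This form interpolates cleanly between density at $\lambda=1$ and the clique score at $\lambda=0$, and isolates the combinatorial object, $\tau(Z(G[U]))$, into which I would embed a hard instance.

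\textbf{Reduction.} I would reduce from a bounded-degree \np-hard problem (such as \prbmaxclique or vertex cover on bounded-degree graphs, or the bounded-occurrence variants \matchprb and \mcutprb suggested by the surrounding macros). The point of bounded degree is that it confines $\abs{E(U)}/\abs{U}$ to a narrow band, so that the penalty $(1-\lambda)\,\tau(Z(G[U]))$, carrying the \emph{fixed positive constant} $1-\lambda$, is what decides the optimum. The construction attaches to the source instance a gadget whose size is chosen as a function of $1-\lambda$, engineered so that (i) a ``yes'' instance admits a designed subgraph $U^\star$ whose induced wedge graph has a small vertex cover, pushing $\score{U^\star;\lambda}$ above a threshold $\theta$, and (ii) in a ``no'' instance every candidate $U$ has a proportionally larger wedge-graph vertex cover, forcing $\score{U;\lambda} < \theta$. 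Deciding whether the \prbstrwk optimum exceeds $\theta$ then decides the source problem, giving \np-hardness for the fixed $\lambda$.

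\textbf{Main obstacle.} The hard part is the free choice of $U$ interacting with the competing density term $\abs{E(U)}$, uniformly over all $\lambda\in(0,1)$. As $\lambda\to 1$ the coefficient $1-\lambda$ is tiny and density dominates, so a \stc-penalized designed subgraph can in principle be beaten by an unrelated dense subgraph (for instance a dense triangle-free subgraph, every labeling of which is forced to waste many weak edges yet whose density stays high); as $\lambda\to 0$ the optimum collapses to a clique. The crux is therefore to certify that the optimum is attained exactly at the intended $U^\star$ and to separate yes/no instances by a margin that survives for the fixed constant $1-\lambda$. I would address this by using the bounded-degree source to cap the density of \emph{every} competing subgraph and by scaling the gadget with $1/(1-\lambda)$, so that the strong-edge advantage of $U^\star$ provably exceeds any density gain available elsewhere; verifying this domination for all $U$ simultaneously is where the real work lies.
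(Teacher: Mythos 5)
Your structural simplification is correct and is a genuinely useful reformulation: for fixed $U$ and $\lambda<1$ the optimal labeling maximizes the number of strong edges, and by the equivalence of \citet{sintos2014using} this gives $\score{U;\lambda} = \bigl(\abs{E(U)} - (1-\lambda)\,\tau(Z(G[U]))\bigr)/\abs{U}$. From that point on, however, the argument is a plan rather than a proof: no gadget is actually constructed, no threshold $\theta$ is computed, and the step you yourself flag as ``where the real work lies'' --- certifying that no competing subgraph beats the designed $U^\star$ --- is precisely the content of an \np-hardness proof and is left undone. Two further concerns with the proposed direction: \prbmaxclique restricted to bounded-degree graphs is polynomial-time solvable (a clique has at most $\Delta+1$ vertices, so all candidates can be enumerated), so at least one of the source problems you float cannot work; and $\tau(Z(G[U]))$ behaves delicately under vertex deletion, since removing vertices destroys wedges, so controlling it uniformly over all $U$ is not a routine bounded-degree calculation.

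The paper's proof sidesteps the ``which $U$ is optimal'' difficulty by a different device. It reduces from \prbminSTC itself: given $G$ on $n$ vertices it builds $H$ by attaching $k = \ceil{1/\lambda}(n+1)/2$ disjoint cliques of size $n$ and joining the $j$-th vertex of the $i$-th clique to $v_j$. An exchange lemma --- comparing $\score{Y,L}$ against the marginal density $\Delta(X,Y)$ of adding or removing a set of vertices --- then forces the optimal $U$ to be \emph{all} of $H$: each clique is either wholly in or wholly out, keeping a clique out is dominated, and omitting any $v_j$ forfeits at least $\lambda k \geq (n+1)/2$ in weak cross-edges, which outweighs any alternative gain. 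Once $U = V(H)$ is pinned down, the subgraph-selection problem collapses to a pure labeling problem whose restriction to $G$ solves \prbminSTC. If you want to complete your route, you would need to supply an analogous global domination argument; the padding-to-force-$U$ trick is one concrete way to discharge exactly the obligation your sketch defers.
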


The proof of Proposition~\ref{prop:nphard2} is in Appendix~\ref{appendix:complexity}.

\section{Algorithms}\label{sec:algorithm}

In this section, we present five algorithms to find a good subgraph for our \prbstrwk problem. 
First, we propose an algorithm based on integer linear programming that finds a near-optimal or exact solution for our problem in Section~\ref{sec:ip}. Next, we state a polynomial time algorithm that solves a linear program in Section~\ref{sec:lp} followed by three heuristics presented in Sections~\ref{sec:stc-dense} and~\ref{sec:greedy}.

\subsection{Exact solution using integer programming}\label{sec:ip}

In this section, we present an integer linear programming~(ILP) based algorithm that can be used to find an exact solution for \prbstrwk.
To solve \prbstrwk we need the following auxiliary problem. The proofs for this section are given in Appendix~\ref{appendix:ip}.

\begin{problem}[$\prbstrwk(\alpha)$]
\label{pr:label-subgrap-str-wk-alpha}
Given a graph $G = (V, E)$, a weight parameter $\lambda$, and a number $\alpha$, find a subset of vertices $U \subseteq V$ and a labeling $L$ of the edges such that the \stc property is satisfied in $(U, E(U))$ and $m_s(U, L) + \lambda m_w(U, L) - \alpha \abs{U}$ is maximized.
\end{problem}

The following proposition, which is an instance of fractional programming~\citep{dinkelbach1967nonlinear}, shows the relationship between $\prbstrwk(\alpha)$ and \prbstrwk.

\begin{proposition}
\label{prop:frac}
Let $U(\alpha)$ and $L(\alpha)$ be the subgraph and the corresponding labeling solving $\prbstrwk(\alpha)$. Similarly, let $U^*$ with labeling $L^*$ be the solution for \prbstrwk. Write $\alpha^* = \score{U^*, L^*}$. If $\alpha > \alpha^*$, then $U(\alpha) = \emptyset$. If $\alpha < \alpha^*$, then $U(\alpha) \neq \emptyset$ and $\score{U(\alpha), L(\alpha)} > \alpha$.
\end{proposition}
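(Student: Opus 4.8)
The plan is to exploit the elementary but decisive identity that, for any \emph{nonempty} $U$ with a feasible labeling $L$, the objective of $\prbstrwk(\alpha)$ factors as
\[
m_s(U, L) + \lambda m_w(U, L) - \alpha\abs{U} = \abs{U}\pr{\score{U, L} - \alpha},
\]
which holds because $\score{U, L} = \pr{m_s(U, L) + \lambda m_w(U, L)}/\abs{U}$. The two problems $\prbstrwk$ and $\prbstrwk(\alpha)$ range over exactly the same feasible set, namely the pairs $(U, L)$ for which the \stc property holds in $(U, E(U))$, so this identity lets me translate statements about the score directly into statements about the sign of the $\prbstrwk(\alpha)$ objective. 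I would also record the boundary fact that the empty set is always feasible and attains objective value $0$, so the optimum of $\prbstrwk(\alpha)$ is at least $0$.

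For the first case ($\alpha > \alpha^*$), I would show that the empty set is the \emph{unique} maximizer. By the definition of $\alpha^*$ as the optimal score, every feasible nonempty $(U, L)$ satisfies $\score{U, L} \le \alpha^*$. Substituting into the identity gives objective value $\abs{U}\pr{\score{U, L} - \alpha} \le \abs{U}\pr{\alpha^* - \alpha} < 0$, strictly negative since $\abs{U} > 0$ and $\alpha > \alpha^*$. As the empty set attains $0$ and strictly beats every nonempty competitor, the optimizer is empty, i.e. $U(\alpha) = \emptyset$.

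For the second case ($\alpha < \alpha^*$), I would use the optimal pair $(U^*, L^*)$ of \prbstrwk as a witness. It is feasible for $\prbstrwk(\alpha)$, and the identity assigns it objective value $\abs{U^*}\pr{\alpha^* - \alpha} > 0$. Hence the optimum of $\prbstrwk(\alpha)$ is strictly positive, which excludes the empty set (value $0$) and forces $U(\alpha) \neq \emptyset$. Applying the identity once more to the returned pair, the inequality $\abs{U(\alpha)}\pr{\score{U(\alpha), L(\alpha)} - \alpha} > 0$ combined with $\abs{U(\alpha)} > 0$ yields $\score{U(\alpha), L(\alpha)} > \alpha$, as required.

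The argument is short and I anticipate no serious obstacle; the only points demanding care are bookkeeping ones. First, I must confirm that $U^*$ is nonempty, so that $\alpha^*$ is well defined and the witness in the second case is genuinely feasible with strictly positive objective; this is immediate whenever the graph has at least one edge, and the degenerate edgeless case can be dispatched separately. Second, I should state explicitly that both problems optimize over the identical feasible region, which is what legitimizes invoking the bound $\score{U, L} \le \alpha^*$ in the first case. Beyond these, the proof reduces entirely to the sign analysis of the factored objective.
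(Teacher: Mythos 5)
Your proof is correct and follows essentially the same route as the paper's: both rest on the observation that the $\prbstrwk(\alpha)$ objective for nonempty $U$ equals $\abs{U}(\score{U,L}-\alpha)$, use the empty set as a feasible baseline of value $0$, and use $(U^*,L^*)$ as a witness in the case $\alpha<\alpha^*$. The only difference is presentational --- you argue the first case directly while the paper argues by contradiction --- so there is nothing substantive to add.
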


We can use the proposition to solve \prbstrwk: we find the (almost) largest $\alpha$ for which $\prbstrwk(\alpha)$ yields a nonempty solution. Then $\prbstrwk(\alpha)$ for such $\alpha$ yields an (almost) optimal solution.

We can solve $\prbstrwk(\alpha)$ with an integer linear program,
\begin{alignat}{3}
\textsc{maximize}&\quad&\sum x_{ij} + \lambda \sum z_{ij}  & - \alpha \sum y_i \label{obj} \textstyle  & 
	\\  
	\textsc{subject to}&& x_{ij} + z_{ij} & \leq y_i &  ij & \in E \label{ip_con_1}\\
	&&x_{ij} + z_{ij} & \leq y_j &  ij & \in E \label{ip_con_2}\\
        &&x_{ij}  + x_{jk} & \leq y_j &  (i, j, k) & \in Z \label{ip_con_3}\\
        &&x_{ij}, z_{ij} & \in \{0,1\} &  ij & \in E \label{var_x_z}\\
        &&y_{i} & \in \{0,1\} &  i & \in V \label{var_v} \quad.
\end{alignat}
Here, $G = (V, E)$ is the input graph and $Z$ is the set of all wedges in $G$.

To see why this program solves $\prbstrwk(\alpha)$,
let $S \subseteq V$ and $L$ be a solution to our $\prbstrwk(\alpha)$. The indicator variable $y_i$ denotes whether the node $i \in S$ or not. The indicator variables $x_{ij}$ and $z_{ij}$ denote if the edge $(i,j)$ is strong or not and $(i,j) $ is weak or not, respectively.
Constraints~\ref{ip_con_1}-\ref{ip_con_2} guarantee that each edge within $S$ is labeled either as strong or weak. Constraint~\ref{ip_con_3} ensures that the \stc constraint is satisfied.  

Proposition~\ref{prop:frac}
allows us to maximize $\alpha$
with a binary search. Here, we set the initial interval $(L, U)$ to $L = 0$ and $U = \frac{n - 1}{2}$, and keep halving the interval until $\abs{ U - L} \leq  \epsilon L$, where $\epsilon > 0$  is an input parameter, and return the solution to $\prbstrwk(L)$. We refer to this algorithm as \algip.
Next, we state that \algip yields an approximation guarantee of  $1/(1+\epsilon)$.

\begin{proposition}
Assume a graph $G = (V, E)$, $\lambda \in [0,\;1]$, and $\epsilon > 0$. 
Let $\alpha$ be the score of the solution returned by $\algip$ and let $\alpha^*$ be the optimal score of \prbstrwk. Then $\alpha \geq \alpha^*/(1 + \epsilon)$.
\label{prop:opt-ip-approx}
\end{proposition}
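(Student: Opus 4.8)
The plan is to track what the binary search guarantees upon termination and then apply Proposition~\ref{prop:frac} to convert the interval information into a bound on the returned score. Throughout, let $\alpha^*$ denote the optimal score of \prbstrwk, and recall that \algip maintains an interval $(L, U)$, halving it repeatedly until $\abs{U - L} \leq \epsilon L$, at which point it returns the solution to $\prbstrwk(L)$. The first step is to establish the loop invariant that $L \leq \alpha^* \leq U$ at every iteration. This holds initially because $L = 0 \leq \alpha^*$ (the score is nonnegative) and $\alpha^* \leq \frac{n-1}{2} = U$, since the score is at most $\frac{m_s + m_w}{\abs{U}} = \frac{\abs{E(U)}}{\abs{U}} \leq \frac{\binom{\abs{U}}{2}}{\abs{U}} = \frac{\abs{U}-1}{2} \leq \frac{n-1}{2}$. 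To see the invariant is preserved, consider the midpoint $\alpha$ tested in each iteration: by Proposition~\ref{prop:frac}, if $\prbstrwk(\alpha)$ yields a nonempty solution then $\alpha \leq \alpha^*$ (contrapositive of the ``$\alpha > \alpha^*$ implies empty'' case), so we may safely raise $L$ to $\alpha$; conversely, if it yields an empty solution, then $\alpha \geq \alpha^*$ (contrapositive of the ``$\alpha < \alpha^*$ implies nonempty'' case), so we may safely lower $U$ to $\alpha$.

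Next I would analyze the returned score. Upon termination we have $L \leq \alpha^* \leq U$ with $U - L \leq \epsilon L$. By Proposition~\ref{prop:frac}, since $L \leq \alpha^*$, the solution $(U(L), L(L))$ to $\prbstrwk(L)$ satisfies $\score{U(L), L(L)} \geq L$ (the proposition gives a strict inequality when $L < \alpha^*$, and equality $\score{U(L), L(L)} = \alpha^*\geq L$ holds in the boundary case $L = \alpha^*$). Hence the returned score $\alpha = \score{U(L), L(L)} \geq L$. Combining $\alpha \geq L$ with the termination condition $U - L \leq \epsilon L$, which rearranges to $U \leq (1+\epsilon)L$, and with the invariant $\alpha^* \leq U$, yields
\[
\alpha^* \leq U \leq (1 + \epsilon) L \leq (1 + \epsilon)\alpha,
\]
so that $\alpha \geq \alpha^* / (1 + \epsilon)$, which is the claim.

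The one subtlety worth handling carefully, rather than an outright obstacle, is making sure the score produced by $\prbstrwk(L)$ is at least $L$ in the degenerate boundary case where the binary search pins $L$ exactly onto $\alpha^*$: here Proposition~\ref{prop:frac} only directly gives strictness for $L < \alpha^*$, so I would note that when $L = \alpha^*$ the optimal subgraph itself is a feasible nonempty solution achieving score $\alpha^* = L$, closing that gap. A second minor point is confirming that the binary search indeed terminates, i.e.\ that the stopping condition $\abs{U - L} \leq \epsilon L$ is eventually met: since $L$ is bounded below by a fixed positive quantity once the optimum is positive (and the trivial empty case $\alpha^* = 0$ can be checked separately), the geometric shrinking of $U - L$ guarantees termination in $\bigO{\log \frac{n}{\epsilon}}$ iterations. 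With these two points addressed, the chain of inequalities above delivers the $1/(1+\epsilon)$ guarantee directly.
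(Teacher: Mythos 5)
Your proof is correct and follows essentially the same route as the paper's: both rest on the invariant $L \leq \alpha^* \leq U$, the fact that $\alpha \geq L$ via Proposition~\ref{prop:frac}, and the chain $\alpha^* \leq U \leq (1+\epsilon)L \leq (1+\epsilon)\alpha$. You simply spell out the details the paper leaves implicit (preservation of the invariant, the boundary case $L = \alpha^*$, and termination), which is fine.
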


Next, we will show that if $\epsilon$ is small enough, we are guaranteed to find the exact solution.

\begin{proposition}
Assume a graph $G$ with $n$ nodes. Assume that the weight parameter $\lambda$ is a rational number $\lambda = \frac{a}{b}$. Then, if we set   $\epsilon = \frac{2}{bn^3}$, \algip returns an exact solution for the \prbstrwk problem in $\bigO{\log n + \log b}$ number of iterations.
\label{prop:opt-ip-exact}
\end{proposition}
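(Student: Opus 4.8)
The plan is to combine the multiplicative guarantee of Proposition~\ref{prop:opt-ip-approx} with a \emph{granularity} argument: since $\lambda = \frac{a}{b}$, every achievable score is a rational with a controlled denominator, so two \emph{distinct} feasible scores cannot be arbitrarily close. Once the additive error of \algip is forced below this minimum gap, the returned score must coincide exactly with the optimum. First I would record the granularity claim: for any feasible $(U, L)$ we have $\score{U, L; \lambda} = \frac{b\,m_s(U, L) + a\,m_w(U, L)}{b\abs{U}}$, a fraction with a nonnegative integer numerator and denominator $b\abs{U}$, where $\abs{U} \le n$. Hence every achievable score lies in $\set{N/(bk) : N \in \naturals,\ 1 \le k \le n}$, and two distinct such rationals $N_1/(bk_1)$ and $N_2/(bk_2)$ differ by at least $\frac{1}{bk_1k_2} \ge \frac{1}{bn^2}$. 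So the minimum separation between distinct achievable scores is at least $1/(bn^2)$.

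Next I would drive the error below this gap. Because $\lambda \le 1$ gives $m_s + \lambda m_w \le \abs{E(U)} \le \binom{\abs{U}}{2}$, the optimum obeys $\alpha^* \le \frac{n-1}{2}$. Proposition~\ref{prop:opt-ip-approx} yields $\alpha \ge \alpha^*/(1+\epsilon)$, and since $\alpha \le \alpha^*$ we obtain $\alpha^* - \alpha \le \alpha^*\epsilon$. Substituting $\epsilon = \frac{2}{bn^3}$ and the bound on $\alpha^*$ gives $\alpha^* - \alpha \le \frac{n-1}{bn^3} < \frac{1}{bn^2}$. As $\alpha$ is the score of the feasible solution that \algip returns and $\alpha^*$ is optimal, both are achievable scores with $\alpha \le \alpha^*$; being separated by strictly less than the minimum gap, they must be equal, so \algip returns an exact optimum.

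For the iteration count I would track the interval width $W_t = \frac{n-1}{2^{t+1}}$ of the binary search on $(L, U)$, which halves each step from $W_0 = \frac{n-1}{2}$. Excluding the trivial empty-edge case, a single strong edge on two vertices shows $\alpha^* \ge \frac12$. Using the bracketing invariant $L \le \alpha^* \le U$ (immediate from Proposition~\ref{prop:frac}), once $W_t \le \alpha^*/2$ we get $L \ge U - W_t \ge \alpha^* - W_t \ge \alpha^*/2 \ge \frac14$. The stopping test $\abs{U - L} \le \epsilon L$ is therefore met once $W_t \le \epsilon/4$, i.e. $2^{t+1} \ge 4(n-1)/\epsilon = 2(n-1)bn^3$, which holds for $t = \bigO{\log n + \log b}$.

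The step I expect to be most delicate is the iteration-count argument, because of the \emph{relative} stopping criterion $\abs{U-L} \le \epsilon L$: since $L$ starts at $0$, I must argue that the test cannot fire prematurely (when $L = 0$ it reads $W_t \le 0$, which never holds) and that $L$ becomes bounded away from $0$ after $\bigO{\log n}$ steps via $\alpha^* \ge \frac12$ and the bracketing invariant. By contrast, the exactness half is comparatively routine once the $1/(bn^2)$ granularity bound and the $\alpha^* \le \frac{n-1}{2}$ estimate are in hand.
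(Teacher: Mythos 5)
Your proof is correct and takes essentially the same route as the paper's: both rest on the observation that any two achievable scores are rationals whose difference has an integer numerator over a denominator of at most $bn^2$, hence distinct scores are separated by at least $1/(bn^2)$, and then use Proposition~\ref{prop:opt-ip-approx} together with $\alpha^* \le (n-1)/2$ to push the error strictly below that gap. Your explicit treatment of the binary-search termination (the relative stopping test, the $L=0$ start, and the lower bound $\alpha^* \ge 1/2$) fills in details the paper only asserts when it states the $\bigO{\log n + \log b}$ iteration count.
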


\algip requires $\bigO{\log n - \log \epsilon}$ iterations, solving an integer linear program in each round. Note that solving an
ILP is \np-hard~\cite{Schrijver1998}, and the fastest known algorithm to solve an ILP exactly runs in ${\log h}^{\bigO{h}}$ time where $h$ is the number of variables~\cite{reis2023subspace}. In practice, we can solve $\prbstrwk(\alpha)$ for moderately sized graphs but for larger graphs solving the ILP becomes computationally infeasible.

This approach is related to two prior works. First, the algorithm by \citet{goldberg1984finding} for finding the densest subgraph problem uses a similar approach, except without the variables $z_{ij}$. In such a case, the program can be solved exactly with a minimum cut.
Secondly, \citet{adriaens2020relaxing} use a linear program with similar wedge constraints to approximate \prbminSTC.

\subsection{Algorithm based on linear programming}\label{sec:lp}

In this section, we present an algorithm, named \alglpstc, based on a linear program obtained by relaxing the integrality requirements of the integer linear program given in the previous section. More specifically,
we first find a fractional solution by solving a linear program~(LP) and then derive a good subgraph via a rounding algorithm. Note that solving linear programs can be done in polynomial time~\citep{karmarkar1984new, van2020deterministic}, and solvers are efficient in practice. The proofs for this section are given in Appendix~\ref{appendix:lp}.

Consider a relaxed version of the ILP, where we replace the constraints in Eqs.~\ref{var_x_z}--\ref{var_v} with $x_{ij}, z_{ij} \in [0, 1]$ and $y_i \in [0, 1]$. We will refer to this optimization problem as $\prbstrwkr(\alpha)$.

Note that we used $\prbstrwk(\alpha)$ combined with the binary search to solve \prbstrwk. We can define a relaxed version of \prbstrwk which then can be analogously solved with $\prbstrwkr(\alpha)$.

\begin{problem}[\prbstrwkr]
Given a graph $G = (V,E)$, a weight parameter $\lambda$, find a nonnegative set of variables $x_{e}$, $y_i$, $z_{e}$, where $e \in E$ and $i \in V$ maximizing
\[
    r(x, y, z) = \frac{\sum x_e + \lambda \sum z_e}{ \sum y_i}
    \quad\text{such that}\quad\text{Eqs.~\ref{ip_con_1}--\ref{ip_con_3} hold}.
\]
\end{problem}

\prbstrwkr is a relaxed version of \prbstrwk:
if we were to require that the variables in \prbstrwkr are binary numbers, then the problems become equivalent. The next proposition is an analog to Proposition~\ref{prop:frac}.

\begin{proposition}
\label{prop:frac2}
 Let $(x^*, y^*, z^*)$ be a solution to \prbstrwkr. Write $\alpha^* = r(x^*, y^*, z^*)$.
Similarly, let $(x(\alpha), y(\alpha), z(\alpha))$ be a solution $\prbstrwkr(\alpha)$.
 If $\alpha > \alpha^*$, then $\sum y_i(\alpha) = 0$. On the other hand, if $\alpha < \alpha^*$, then $\sum y_i(\alpha) > 0$ and $r(x(\alpha), y(\alpha), z(\alpha)) > \alpha$.
\end{proposition}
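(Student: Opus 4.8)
The plan is to mirror the fractional-programming argument behind Proposition~\ref{prop:frac}, now applied to the relaxed program. Everything rests on a single algebraic identity: for any feasible $(x,y,z)$ with $\sum y_i > 0$, the objective of $\prbstrwkr(\alpha)$ can be rewritten as
\[
    \sum x_e + \lambda \sum z_e - \alpha \sum y_i = \bigl(r(x,y,z) - \alpha\bigr)\sum y_i.
\]
I would also record two standing observations. First, the all-zero assignment is feasible for $\prbstrwkr(\alpha)$ and attains objective value $0$, so the optimum of $\prbstrwkr(\alpha)$ is always nonnegative. Second, every feasible point of $\prbstrwkr(\alpha)$, whose variables lie in $[0,1]$, is also feasible for \prbstrwkr, whose variables are merely nonnegative; hence its ratio is at most $\alpha^*$ whenever $\sum y_i > 0$.

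For the first implication, I would assume $\alpha > \alpha^*$ and take any feasible $(x,y,z)$ with $\sum y_i > 0$. Since such a point is feasible for \prbstrwkr, we have $r(x,y,z) \le \alpha^* < \alpha$, so the identity yields a strictly negative objective. As the zero solution already achieves $0$, no point with a positive $y$-sum can be optimal, and therefore any optimal $(x(\alpha), y(\alpha), z(\alpha))$ must satisfy $\sum y_i(\alpha) = 0$.

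For the second implication, I would assume $\alpha < \alpha^*$ and start from an optimal ratio solution $(x^*, y^*, z^*)$, which has $\sum y_i^* > 0$. Because the constraints in Eqs.~\ref{ip_con_1}--\ref{ip_con_3} are homogeneous and $r$ is scale-invariant, I can scale this point down by a positive factor so that all coordinates fall in $[0,1]$ without changing the ratio; the result is feasible for $\prbstrwkr(\alpha)$ and, by the identity, has objective $(\alpha^* - \alpha)\sum y_i > 0$. Thus the optimum of $\prbstrwkr(\alpha)$ is strictly positive, so any optimal $(x(\alpha), y(\alpha), z(\alpha))$ satisfies $\sum x_e(\alpha) + \lambda \sum z_e(\alpha) - \alpha \sum y_i(\alpha) > 0$. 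This forces $\sum y_i(\alpha) > 0$: if the $y$-sum were zero, then all $y_i(\alpha) = 0$, and constraints~\ref{ip_con_1}--\ref{ip_con_2} together with nonnegativity would force every $x_e(\alpha) = z_e(\alpha) = 0$, making the objective $0$. Dividing the strict inequality by $\sum y_i(\alpha) > 0$ then gives $r(x(\alpha), y(\alpha), z(\alpha)) > \alpha$.

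The only place requiring genuine care, and the main obstacle, is the mismatch between the box $[0,1]$ defining $\prbstrwkr(\alpha)$ and the unbounded nonnegative orthant defining \prbstrwkr: an optimal ratio solution need not itself lie in the box. I expect to dispatch this exactly as above, using the homogeneity of Eqs.~\ref{ip_con_1}--\ref{ip_con_3} and the scale-invariance of $r$ to move an optimal ratio point into the box while preserving both feasibility and the ratio. I would also briefly note that $\alpha^*$ is well defined with $\sum y_i^* > 0$ whenever $G$ contains an edge, so the denominators appearing in the argument are legitimately positive.
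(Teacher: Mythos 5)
Your proposal is correct and follows essentially the same route as the paper: scale an optimal ratio solution into the box $[0,1]$ using the homogeneity of the constraints and the scale-invariance of $r$, then rerun the Dinkelbach-style argument of Proposition~\ref{prop:frac} (nonnegativity of the optimum of $\prbstrwkr(\alpha)$ via the zero solution, and comparison of objectives). You merely make explicit some details the paper leaves implicit, such as the identity $\sum x_e + \lambda \sum z_e - \alpha \sum y_i = (r(x,y,z)-\alpha)\sum y_i$ and the fact that $\sum y_i = 0$ forces all $x_e = z_e = 0$ through constraints~\ref{ip_con_1}--\ref{ip_con_2}.
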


Proposition~\ref{prop:frac2} allows us to solve \prbstrwkr with $\prbstrwkr(\alpha)$ and a binary search, similar to \algip. However, we can solve \prbstrwkr directly with a single linear program, that is,

\begin{alignat*}{3}
	\textsc{maximize}&\quad& \sum x_{ij} & + \lambda  \sum z_{ij}   & 
	\\  
	\textsc{subject to} &&  x_{ij} + z_{ij} & \leq y_i &  ij & \in E \\
	&& x_{ij} + z_{ij} & \leq y_j &  ij & \in E\\
   && \sum y_i & = 1 \\
    && x_{ij}  + x_{jk} & \leq y_j & (i, j, k) & \in Z\\
        && x_{ij}, z_{ij} & \geq 0 &  ij & \in E\\
        && y_{i} & \geq 0 &  i & \in V
\end{alignat*}

\begin{proposition}
\label{prop:lp}
The LP given above solves \prbstrwkr.
\end{proposition}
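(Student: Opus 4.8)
The plan is to exploit that \prbstrwkr is a linear-fractional program whose constraints form a scale-invariant cone, so that the ratio objective $r(x,y,z)$ collapses to a linear objective once the denominator is normalized. First I would observe that each of Eqs.~\ref{ip_con_1}--\ref{ip_con_3}, together with the nonnegativity requirements, is homogeneous of degree one, i.e.\ of the form (a linear combination of the variables) $\le 0$. Consequently, if $(x,y,z)$ is feasible and $c>0$, then $(cx,cy,cz)$ is feasible as well. Moreover $r$ is invariant under such scaling, since multiplying every variable by $c$ multiplies both numerator and denominator by $c$. This scale invariance is the structural fact that makes the single LP equivalent to the ratio problem.

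Next I would establish equality of the optimal values through two inequalities. For $\mathrm{OPT}_{\mathrm{LP}}\ge\mathrm{OPT}_{\prbstrwkr}$: take any feasible $(x,y,z)$ for \prbstrwkr, which by definition has $\sum y_i>0$; set $c=1/\sum y_i$ and consider $(cx,cy,cz)$. By homogeneity this point is still feasible and satisfies $\sum c y_i=1$, so it is feasible for the LP, and its LP objective $\sum c x_{ij}+\lambda\sum c z_{ij}$ equals $r(x,y,z)$ by scale invariance. Hence the LP attains at least every value achievable in \prbstrwkr. For the reverse inequality $\mathrm{OPT}_{\prbstrwkr}\ge\mathrm{OPT}_{\mathrm{LP}}$: any LP-feasible point already has $\sum y_i=1>0$, so it is feasible for \prbstrwkr, where $r(x,y,z)=(\sum x_e+\lambda\sum z_e)/1$ equals its LP objective. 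Combining the two inequalities shows the optimal values coincide; moreover an LP optimizer is a \prbstrwkr optimizer, and any \prbstrwkr optimizer rescales by $1/\sum y_i$ to an LP optimizer.

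The one point requiring care, and the main obstacle, is the degenerate case $\sum y_i=0$, where $r$ is undefined and the normalization $c=1/\sum y_i$ is illegitimate. I would rule this out at the optimum by exhibiting a feasible point of positive value: assuming $G$ has at least one edge $ij$, set $y_i=y_j=1$, label that edge strong with $x_{ij}=1$ and all remaining variables zero. Any wedge through $i$ or $j$ has its third endpoint $k$ with $y_k=0$, which forces $x_{\cdot k}=0$ through Eqs.~\ref{ip_con_1}--\ref{ip_con_2}, so the wedge constraint~\ref{ip_con_3} reduces to $1\le 1$; the remaining constraints are immediate. This certifies $\mathrm{OPT}_{\prbstrwkr}>0$, so the supremum defining \prbstrwkr is taken over points with positive denominator and the scaling above is always valid. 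Finally, since the LP feasible region is a bounded polytope ($0\le y_i\le 1$ and $0\le x_{ij},z_{ij}\le y_i\le 1$), the LP maximum is attained, and the correspondence transfers attainment back to \prbstrwkr.
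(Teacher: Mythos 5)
Your proposal is correct and follows essentially the same route as the paper: both rest on the observation that the constraints in Eqs.~\ref{ip_con_1}--\ref{ip_con_3} and the objective $r$ are invariant under positive scaling, so one may normalize $\sum y_i = 1$ and replace the ratio with a linear objective. You simply spell out the two inequalities between the optimal values and the degenerate case $\sum y_i = 0$, which the paper's one-line proof leaves implicit.
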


Our LP is related to the LP proposed  by~\citet{charikar2000greedy}, which is used to solve the densest subgraph problem exactly. 
We extend Charikar's LP by adding strong edges and additional \stc constraints.
Another related work is the LP proposed by~\citet{adriaens2020relaxing} which provides a $2$-approximation for \prbminSTC using similar wedge constraints.

\textbf{Rounding phase}:
Next, we describe the heuristic used to obtain the subgraph and the labeling from the variables. 
Let ($x^*$, $y^*$, $z^*$) be the solution to \prbstrwkr.
First we define a collection of sets $\calS =\{S_1, S_2, \ldots , S_n\}$ where $S_j = \{ i: y^{*}_i \geq  y_j^{*}\}$.
Then we enumerate over the collection of subgraphs induced by $\calS$. 

For each $S_j$, we initially set all the edges as weak. Then we enumerate over each edge $e \in E(S_j)$ starting from the largest $z_{e}^*$. Each edge $e$ is labeled as strong if the \stc property is not violated. This means that we check if there is any edge adjacent to any of the endpoints of $e$ which is already labeled as strong and still creates a wedge with $e$. We continue the same process for all the edges $e \in E(S_j)$ in the descending order of its $z_{e}^*$ value. Finally, out of all the subgraphs we pick the subgraph and the labeling that maximizes our score.

Constructing a labeling for a single $S_j$ amounts to enumerating over the wedges in $\bigO{nm}$ time, leading to a total time of $\bigO{n^2m}$ for the rounding.

\subsection{Label, find the densest subgraph, and relabel} \label{sec:stc-dense}
Next, we explain two algorithms that combine the existing methods for finding the densest subgraph and finding the STC-compliant labeling in an entire graph.

The approach is as follows.
First, we label the edges of the entire graph using \algminstc (see Section~\ref{sec:related}). Then we construct a weighted version of the graph assigning a weight of $1$ for strong edges and a weight of $\lambda$ for weak edges. Next, we search for the densest subgraph using \algdg or \algdc (see Section~\ref{sec:related}) in the new weighted graph.  
Finally, we relabel {\em only} the subgraph induced by the returned solution. Relabeling is used to improve the score since some of the edges might be marked as weak since they contributed to certain wedges in the original graph, nevertheless, some edges no longer contribute to all of those wedges.
The pseudo-code for this method is given in Algorithm~\ref{alg:dense}. We call the algorithm as \algdenseg or \algdensec
based on the subroutine used in Line~\ref{alg:dense-subgraph} of Algorithm~\ref{alg:dense}.

\begin{algorithm}[t!]
\caption{$\algdenseg(G, \lambda)$ and $\algdensec(G, \lambda)$, both find a subgraph $U$ and a labeling $L$ with good $\score{U, L; \lambda}$.}
\label{alg:dense}
    $L \define \algminstc(G)$\;
    $H \define$ the weighted graph by setting $1$ to strong edges and $\lambda$ to  weak edges of $G$\;
    $U \define  \algdg(H)$~\cite{goldberg1984finding}  or $\algdc(H)$~\cite{ charikar2000greedy}\label{alg:dense-subgraph}\;
    $L' \define \algminstc(G(U))$\;
    \Return subgraph $U$\ and its labeling $L'$\;
\end{algorithm}

Next, we present the computational complexities of the \algdensec and \algdenseg algorithms.

\begin{proposition}
Assume a graph $G$ with $n$ nodes and $m$ edges. Assume that the wedge graph $Z(G)$ contains $n'$ nodes and $m'$ edges.
Then the running time of \algdensec is in
\begin{align*}
	\bigO{\sum_{v \in V} \deg(v)^2 + (m + n \log n) + (m' + n')}  \subseteq \bigO{nm} \quad.
\end{align*}
\label{prop:dense}
\end{proposition}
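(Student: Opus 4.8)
The plan is to bound the cost of each of the four lines of Algorithm~\ref{alg:dense} separately and then add them, matching the three summands in the claimed bound to (i) the construction of the wedge graph, (ii) the densest-subgraph peeling, and (iii) the maximal matching. Finally I would verify the inclusion $\subseteq \bigO{nm}$ by bounding each summand individually.

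First I would analyze $\algminstc(G)$ on Line~1. As described in Section~\ref{sec:related}, this amounts to building $Z(G)$ and then computing a maximal matching on it. To build $Z(G)$, for every vertex $v$ I enumerate all pairs of its neighbors and test, in constant time via a hash set of $E$, whether the pair is a non-edge; each such non-edge yields one wedge, i.e.\ one edge of $Z(G)$. This enumeration costs $\bigO{\sum_{v} \binom{\deg(v)}{2}} = \bigO{\sum_{v} \deg(v)^2}$, which accounts for the first summand. The resulting wedge graph has $n'$ nodes and $m'$ edges, and the Clarkson-style maximal matching (repeatedly pick an edge, add both endpoints, delete incident edges) runs in time linear in its size, i.e.\ $\bigO{m' + n'}$, giving the third summand.

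Next I would bound $\algdc(H)$ on Line~\ref{alg:dense-subgraph}. This is Charikar's greedy peeling applied to the weighted graph $H$ (weight $1$ on strong edges, $\lambda$ on weak ones). Using a Fibonacci heap keyed on weighted degree, the algorithm performs $n$ delete-min operations and $\bigO{m}$ decrease-key operations (each edge triggers one decrease-key when its first endpoint is removed), for a total of $\bigO{m + n \log n}$; since $H$ has the same node and edge sets as $G$, this matches the middle summand. The relabeling $\algminstc(G(U))$ on Line~4 operates on an induced subgraph of $G$, so its cost is dominated by the Line~1 analysis and adds nothing asymptotically. Summing the three contributions yields the stated running time.

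It remains to justify $\subseteq \bigO{nm}$, which I would do term by term. For the first summand, $\sum_{v} \deg(v)^2 \le \fpr{\max_v \deg(v)} \sum_{v} \deg(v) \le (n-1)\cdot 2m = \bigO{nm}$. For the third, each node of $Z(G)$ is an edge of $G$, so $n' \le m$, while $m' \le \sum_{v}\binom{\deg(v)}{2} = \bigO{nm}$, hence $m' + n' = \bigO{nm}$. The middle summand is the only delicate point, and the one I expect to be the main (minor) obstacle: $n \log n = \bigO{nm}$ is not automatic and requires $m = \Omega(n)$. I would resolve it with the standard assumption that the input has no isolated vertices, so that $2m = \sum_v \deg(v) \ge n$ and therefore $n \log n \le n^2 = \bigO{nm}$. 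Combining the three bounds gives $\bigO{nm}$ overall.
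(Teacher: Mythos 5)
Your proof is correct and follows essentially the same decomposition as the paper's: wedge-graph construction in $\bigO{\sum_v \deg(v)^2}$, maximal matching in $\bigO{m'+n'}$, and \algdc in $\bigO{m+n\log n}$, with the same term-by-term bounds for the inclusion in $\bigO{nm}$. You are in fact slightly more careful than the paper, which silently assumes $m = \Omega(n)$ when absorbing the $n\log n$ term; your explicit handling of that point is a reasonable addition rather than a divergence.
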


\begin{proof}
The number of wedges in $G$, and hence the number of edges in $Z(G)$ is in $\bigO{\sum_{v \in V} \deg(v)^2} \subseteq \bigO{n\sum_{v \in V} \deg(v)}  \subseteq \bigO{nm}$. 
The number of vertices, $n'$, in the wedge graph $Z(G)$ is in $\bigO{m}$.
\algminstc estimates the minimum vertex cover problem with a maximum matching for $Z(G)$ and the subgraph, which has a running time of $\bigO{n' + m'}$ when the adjacency list representation is used for the graph~\cite{cormen2022introduction}. We can execute
\algdc in $\bigO{m + n \log n}$ time. The claim follows.
\end{proof}

\begin{proposition}
Assume a graph $G$ with $n$ nodes and  $m$ edges. Assume that the wedge graph $Z(G)$ contains $n'$ nodes and $m'$ edges.
Then the running time of \algdenseg  is in
\begin{align*}
\bigO{mn + n (n + m) \log n + (m' + n')} \subseteq \bigO{ mn \log n} \quad.
\end{align*} 

\label{prop:dense-goldberg}
\end{proposition}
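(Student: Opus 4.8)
The plan is to mirror the proof of Proposition~\ref{prop:dense}, since \algdenseg differs from \algdensec only in Line~\ref{alg:dense-subgraph}, where the peeling routine \algdc is replaced by Goldberg's min-cut based routine \algdg. Consequently, the bounds established there for the remaining steps carry over unchanged: constructing the wedge graph $Z(G)$ and enumerating all wedges costs $\bigO{\sum_{v \in V}\deg(v)^2} \subseteq \bigO{mn}$, the graph $Z(G)$ has $n' \in \bigO{m}$ nodes and $m' \in \bigO{mn}$ edges, and the two calls to \algminstc (the initial labeling and the relabeling of the returned subgraph) each run in $\bigO{n' + m'}$ time. Thus the only genuinely new quantity to bound is the running time of \algdg on the weighted graph $H$, which has $n$ nodes and $m$ edges.

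First I would recall the structure of Goldberg's algorithm: it locates the optimal density by a binary search over candidate values, and for each candidate it builds a flow network and computes a minimum $s$--$t$ cut that decides whether a subgraph whose density exceeds the candidate exists. This flow network has $\bigO{n}$ nodes (the vertices of $H$ together with a source and a sink) and $\bigO{n + m}$ arcs (the source and sink arcs plus one pair per edge). Because the feasible densities form a discrete set separated by a gap that is polynomially small in $1/n$, $\bigO{\log n}$ iterations of the binary search suffice to isolate the optimum. Using a maximum-flow subroutine running in $\bigO{V E}$ time on a network with $V$ nodes and $E$ arcs, each cut computation then costs $\bigO{n(n+m)}$, so \algdg runs in $\bigO{n(n+m)\log n}$ time overall.

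Adding the three contributions yields the stated bound $\bigO{mn + n(n+m)\log n + (m'+n')}$. To obtain the simplified form $\bigO{mn\log n}$, I would substitute $n' \in \bigO{m}$ and $m' \in \bigO{mn}$ and use that, after discarding isolated vertices, $n \le 2m$, so $n(n+m)\log n = (n^2 + nm)\log n \subseteq \bigO{nm\log n}$ dominates the other two terms. The one place that requires care is the accounting for \algdg: pinning down the $\bigO{\log n}$ bound on the number of binary-search steps and choosing a max-flow routine whose $\bigO{VE}$ cost produces the $n(n+m)$ per-iteration term. Everything else is a direct transcription of the proof of Proposition~\ref{prop:dense}.
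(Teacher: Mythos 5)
Your proposal is correct and follows essentially the same route as the paper: it reduces the analysis to that of Proposition~\ref{prop:dense}, with the only new ingredient being the cost of \algdg, bounded as $\bigO{\log n}$ min-cut computations at $\bigO{n(n+m)}$ each (the paper cites Orlin's max-flow bound for exactly this term). Your write-up is merely more explicit about where the $\log n$ factor comes from; no substantive difference.
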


\begin{proof}
The only change compared to Proposition~\ref{prop:dense} is that we are using an exact algorithm instead of an approximation algorithm for finding the densest subgraph.
The exact algorithm for an edge-weighted graph takes $\bigO{ M(n, n + m)\log n}$ time, and $M(n, n + m)$ is the time taken to solve the min-cut problem for a graph with $n$ number of nodes and $(n + m)$ number of edges.
It takes $\bigO{n (n + m)}$ to find the minimum cut
\cite{orlin2013max}. The claim follows.
\end{proof}

\subsection{Peeling with continuous relabeling}\label{sec:greedy}

The \algdensec algorithm, given in the previous section, first finds a labeling and then uses \algdc that constructs a set of subgraphs among which the subgraph with the highest score is selected. During this search, the labeling remains fixed. Our final algorithm modifies this approach by relabeling the graph as we are constructing the subgraphs.

Our approach is as follows. 
We start from the whole graph $G$ and label the edges as either strong or weak using \algminstc. 
Given a labeling $L$ and a subgraph $U$, let the weighted degree for a vertex $\scorev{v, U, L, \lambda}$ be defined as the sum of strong edges and weak edges in $U$ incident to $v$ weighted by $\lambda$, i.e., $\scorev{v, U, L} = \deg_s(v, U, L) + \lambda \deg_w(v, U, L)$ We drop $L$, $U$ or $\lambda$ when it is clear from the context.
At each iteration, we delete the node that has the minimum weighted degree $\scorev{v}$. After removing each vertex we relabel the remaining set of edges. Finally, we choose the subgraph $U$ which corresponds to the maximum score \score{U, \lambda} out of all the iterations.
The naive version for this method is given in Algorithm~\ref{alg:greedy}.

\begin{algorithm}[t!]
\caption{$\alggreedy(G, \lambda)$, finds a subgraph $U$ and a labeling $L$ with good $\score{U, L; \lambda}$}
\label{alg:greedy}
    $U \define V$\;
    \While {there are nodes} {
        $L \define \algminstc(G(U))$\label{alg:minstc-step}\;
	$u \define \displaystyle\arg\min_{ v \in U} \scorev{v, U, L, \lambda}$\label{alg:score-step}\;
	$U \define U \setminus \set{u}$\;
    }
	\Return best tested $U$ and its labeling $L$\;
\end{algorithm}

Next, we explain several tricks to speed up the naive implementation of Algorithm \ref{alg:greedy}. 
We focus on updating the wedge graph, modifying the minimum vertex cover, and updating individual scores of each vertex without computing them from scratch.

\textbf{Maintain wedge graph}: Note that on Line~\ref{alg:minstc-step} of Algorithm~\ref{alg:greedy}, we need to repeatedly construct a wedge graph to solve \prbminSTC.
We can avoid this by maintaining the existing wedge graph as vertices are deleted.

When a node is deleted we need to consider only {\em deleting} respective edges in the wedge graph since new wedges cannot be introduced. Note that an edge in the original graph~$G$ corresponds to a node in the wedge graph $Z(G)$ and edges in $Z(G)$ represent wedges in $G$. Next, we state how to maintain $Z(G)$ when a vertex is deleted in Proposition \ref{prop:wedge}.

\begin{proposition}
\label{prop:wedge}
Let $G = (V, E)$ be a graph.
Let $v$ be a vertex in $G$. Define $G' = G(V \setminus \set{v}, E \setminus N(v))$, where $N(v)$ is the set of adjacent edges of vertex $v$ in $G$.
Then, a new wedge graph $Z(G')$ is formed by deleting the vertices in $Z(G)$ corresponding to $N(v)$. 
\end{proposition}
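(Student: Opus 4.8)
The plan is to reduce the claim to two facts about how the set of wedges changes when $v$ is removed, and then translate these back through the edge/vertex dictionary that defines $Z(G)$. Recall that the vertices of $Z(G)$ are the edges of $G$ lying in at least one wedge, and the edges of $Z(G)$ are in bijection with the wedges of $G$: a wedge on edges $e_1, e_2$ becomes the edge $\{e_1, e_2\}$ of $Z(G)$. Since $G' = G(V \setminus \set{v}, E \setminus N(v))$ simply drops $v$ together with its incident edges $N(v)$, deleting the vertices of $Z(G)$ that correspond to $N(v)$ (call these the $N(v)$-vertices) removes precisely those $Z(G)$-edges incident to some $N(v)$-vertex. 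So it suffices to show that the wedges of $G'$ are exactly the wedges of $G$ that use no edge of $N(v)$.

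The crux is the ``no new wedges'' direction, and this is where I expect the only real subtlety. Let $(x, y, z)$ be a wedge of $G'$, so $(x,y), (y,z) \in E \setminus N(v)$ and $(x,z) \notin E \setminus N(v)$. Because $E(G') \subseteq E$, the two present edges are also edges of $G$, so the only thing to check is $(x,z) \notin E$. Suppose instead $(x,z) \in E$; then $(x,z) \in N(v)$, forcing $v \in \set{x, z}$. But if $v = x$ (resp.\ $v = z$) then the wedge edge $(x,y)$ (resp.\ $(y,z)$) is incident to $v$ and hence lies in $N(v)$, contradicting that it is an edge of $G'$. Therefore $(x,z) \notin E$ and $(x, y, z)$ was already a wedge of $G$. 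This confirms that deletion can only destroy wedges, never create them, which is exactly what licenses maintaining $Z(G)$ by deletions alone.

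For the complementary direction I would argue that a wedge $(x,y,z)$ of $G$ survives in $G'$ if and only if $v \notin \set{x, y, z}$: if $v$ is one of the three vertices then at least one of $(x,y), (y,z)$ is incident to $v$ and is deleted, killing the wedge; if $v$ avoids all three then both edges persist and $(x,z) \notin E \supseteq E(G')$ keeps it a wedge. Since a wedge involves $v$ exactly when one of its two edges lies in $N(v)$, the surviving wedges are precisely those whose $Z(G)$-edge avoids every $N(v)$-vertex. Combined with the previous paragraph, the edge set of $Z(G')$ coincides with the edge set obtained by deleting the $N(v)$-vertices from $Z(G)$, which proves the proposition. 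The one cosmetic point to flag is that an edge of $G'$ lying in a wedge of $G$ only through $v$ becomes isolated after the deletion and is formally dropped from $Z(G')$ by definition; such isolated vertices are irrelevant to the vertex-cover/matching computation that the maintenance serves, so they can be safely ignored (or the statement read up to isolated vertices).
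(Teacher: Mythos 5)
Your argument is correct. The paper itself offers no proof here --- it states only ``We omit the straightforward proof'' --- so there is nothing to compare against; your write-up simply supplies the omitted details. The two directions you check (deletion of $v$ creates no new wedges, and a wedge of $G$ survives exactly when none of its two edges lies in $N(v)$) are precisely what the claim needs, and the one genuine subtlety --- that a surviving edge of $G'$ may become an isolated vertex after the deletion and is then formally excluded from $Z(G')$ by the paper's definition of the wedge graph --- is something you correctly identify and correctly dismiss as harmless for the vertex-cover/matching maintenance the proposition is used for. If anything, your observation shows the proposition as literally stated is true only up to such isolated vertices, a caveat the paper glosses over.
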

We omit the straightforward proof.

\textbf{Dynamic vertex cover using maximal matching}:
Next, we consider updating the vertex cover after a vertex deletion. 

Recall that we use maximum matching to approximate the vertex cover in \algminstc.
Given a maximal matching of the current graph, \citet{ivkovic1993fully} presented a simple algorithm to maintain the cover when an {\em edge} is deleted or inserted. Here we modify their algorithm slightly to adapt to a node deletion from $G$. 
Let us consider the case where the vertex $v$ is deleted from the original graph $G$. 
Note that $N(v)$ is a set of edges in $G$ which corresponds to a subset of nodes in $Z(G)$.
According to Proposition~\ref{prop:wedge},
the set of nodes corresponding to $N(v)$ should be deleted from the wedge graph $Z(G)$ to compensate for the deleted vertex. We assume that a maximal matching~$M$ of $Z(G)$ is given.

The algorithm is as follows.
We iterate over the elements in $N(v)$ and pick a node $a \in N(v)$ in $Z(G)$. We then test whether there is an edge $(a, b)$ in $M$ for some $b$. There can be only one, and if there is, we delete it. Upon such deletion, $M$ may no longer be maximal since $b$ may have a single adjacent edge that can be added. We search for such an edge and add it if one is found. 

The pseudocode is given in Algorithm~\ref{alg:dynamic-vc-node-pov}.
Algorithm~\ref{alg:dynamic-vc-node-pov} still produces a maximal matching; thus a $2$-approximation for \prbcovermin is guaranteed.

\begin{algorithm}[t!]
\caption{$\algdynvc(M,  v)$, maintains a vertex cover (a maximal matching $M$) when a node $v$ is deleted}
\label{alg:dynamic-vc-node-pov}
    \ForEach {$a \in N(v)$} {
        \If {there is $b$ such that $(a, b) \in M$} {
            delete $(a, b)$ from $M$\;
            \If {$b \notin N(v)$ \AND there is $t \notin N(v)$ such that $t$ is not an endpoint of any edge in $M$ and $(b, t) \in E(Z(G))$} {
                add $(b, t)$ to $M$\;
            }
        }
    }
    \Return $M$\;
\end{algorithm}

\textbf{Speeding the vertex selection}:
We can speed up finding the next vertex by maintaining $\scorev{v, \lambda}$ in a priority queue.
Once a vertex is deleted, we need to update the degree of its neighboring nodes. Also, we may need to update the weighted degree of the affected vertices if the vertex cover of $Z(G)$ changes. However, the number of changed edges in the vertex cover is constant.
The final version of the algorithm is presented in Algorithm~\ref{alg:greedy-fast-2}. 

\begin{algorithm}[ht!]
\caption{$\alggreedyfastest(G, \lambda)$, finds a subgraph $U$ and a labeling $L$ with good $\score{U, L; \lambda}$}
\label{alg:greedy-fast-2}
    $L \define \algminstc(G)$\;
    $P \define$ priority queue where each node is ranked by $\scorev{v, L}$\;
    $U \define V$\;
    \While {there are nodes} {
	$u \define \displaystyle\arg\min_{ v \in U} \scorev{v, U, L} \label{alg:score-step-pq}$\;
        $U \define U \setminus \set{u}$\;
        Update the wedge graph $Z((U, E(U)))$\;
        Update labeling $L$ using Algorithm~\ref{alg:dynamic-vc-node-pov}\;
        Update $P$\;
    }
    \Return best tested $U$ and its labeling $L$\;
\end{algorithm}

Next, we state the computational complexity of \alggreedyfastest.

\begin{proposition}
Assume a graph $G$ with $n$ nodes and  $m$ edges. Assume that the wedge graph $Z(G)$ contains $n'$ nodes and $m'$ edges.
Then the running time of \alggreedyfastest is in
\[
\bigO{ mn  + (n' + m') + m \log n + nm} \subseteq \bigO{nm} \quad.
\]
\label{prop:greedy}
\end{proposition}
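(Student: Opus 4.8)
The plan is to charge the running time of \alggreedyfastest to five disjoint buckets — the one-time setup, and the four operations inside the main loop of Algorithm~\ref{alg:greedy-fast-2} (extracting the minimum, maintaining the wedge graph, running the dynamic vertex-cover update, and refreshing the priority queue) — bound each bucket summed over the whole execution, and add them up. Throughout I would reuse the counting facts from the proof of Proposition~\ref{prop:dense}: the wedge graph $Z(G)$ has $n' = \bigO{m}$ nodes and $m' = \bigO{\sum_v \deg(v)^2} \subseteq \bigO{nm}$ edges, the last inclusion because $\sum_v \deg(v)^2 \le (\max_v \deg(v)) \sum_v \deg(v) \le (n-1)(2m)$.

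For the setup, constructing $Z(G)$ together with the initial \algminstc labeling costs $\bigO{\sum_v \deg(v)^2 + n' + m'} = \bigO{nm}$, and initializing the priority queue costs $\bigO{n}$. Two of the loop buckets are then routine. First, maintaining $Z(G)$: by Proposition~\ref{prop:wedge} deleting a vertex of $G$ only removes nodes of $Z(G)$, and each node — hence each incident edge — of $Z(G)$ is removed exactly once, so the total edge-removal work across all iterations is $\bigO{m'} = \bigO{nm}$. Second, the priority queue: there are $n$ extract-min operations, and since a single $Z(G)$-node removal alters only $\bigO{1}$ matching edges (and thus $\bigO{1}$ strong/weak labels), deleting a vertex $v$ triggers only $\bigO{\deg(v)}$ weighted-degree updates — from its own incident edges plus the relabelings of the $\deg(v)$ removed $Z(G)$-nodes. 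Summing gives $\bigO{\sum_v \deg(v)} = \bigO{m}$ updates, so the queue contributes $\bigO{(n+m)\log n} \subseteq \bigO{nm}$.

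The step I expect to be the crux is bounding the dynamic vertex-cover maintenance (Algorithm~\ref{alg:dynamic-vc-node-pov}), because the naive estimate is not good enough: a replacement search for a freed node $b$ scans the current $Z(G)$-neighbors of $b$, and bounding that neighborhood by the trivial $n' = \bigO{m}$ would give $\bigO{m^2}$ overall, exceeding the claimed $\bigO{nm}$. The key observation I would prove is that every node of $Z(G)$ has degree only $\bigO{n}$: a node of $Z(G)$ is an edge $(x,y)$ of $G$, and each wedge through $(x,y)$ extends either at $x$ to a non-neighbor of $y$ or at $y$ to a non-neighbor of $x$, so its $Z(G)$-degree is at most $\deg(x) + \deg(y) - 2 = \bigO{n}$. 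Since iterating over $N(v)$ across all deletions costs $\sum_v \deg(v) = \bigO{m}$ overall, and at most $n' = \bigO{m}$ of these iterations correspond to a first-time removal of a matched node and hence trigger a replacement search of cost $\bigO{\deg_{Z(G)}(b)} = \bigO{n}$, the dynamic vertex-cover work totals $\bigO{nm}$. That Algorithm~\ref{alg:dynamic-vc-node-pov} changes only $\bigO{1}$ matching edges per $Z(G)$-node removal — already used for the queue bound — follows immediately from its structure, which also guarantees the maintained matching stays maximal and hence a $2$-approximate cover.

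Collecting the buckets yields $\bigO{nm} + \bigO{nm} + \bigO{nm} + \bigO{(n+m)\log n}$, which simplifies to $\bigO{nm + m\log n} = \bigO{nm}$, matching the stated decomposition (the two $\bigO{nm}$ terms arising from wedge construction/maintenance and from the replacement searches, the $\bigO{n'+m'}$ term from the maximal-matching bookkeeping, and the $\bigO{m\log n}$ term from the queue). The only genuinely delicate point is the $\bigO{n}$ degree bound on $Z(G)$; every other contribution is additive accounting.
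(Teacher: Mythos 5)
Your proof is correct and follows essentially the same decomposition as the paper's: a one-time $\bigO{nm}$ setup for $Z(G)$ and the initial matching, $\bigO{n'+m'}$ total for maintaining the wedge graph, $\bigO{m\log n}$ for the priority queue (via the same observation that each deletion touches only $\bigO{\deg(u)}$ priorities because the matching changes by $\bigO{1}$ edges per removed $Z(G)$-node), and $\bigO{nm}$ for the replacement searches in the dynamic matching. The one place you go beyond the paper is that you actually justify the $\bigO{n}$ cost of a single replacement search by bounding the $Z(G)$-degree of the node corresponding to an edge $(x,y)$ by $\deg(x)+\deg(y)-2$, whereas the paper simply asserts that finding a new matching edge ``may cost $\bigO{n}$ time''; your added argument is correct and closes a detail the paper leaves implicit.
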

\begin{proof}
Let $G_i$ be the graph at $i$th iteration. Consider deleting vertex $u$ from $G_i$.
Upon deletion, we need to update the priorities of the affected nodes in the queue.

When $u$ is deleted from $G_i$, we need to delete the set of nodes in $Z(G)$ which corresponds to the adjacent edges of $u$. 
For each deleted vertex in $Z(G)$, there can be at most one adjacent edge that belongs to the existing matching set.
Therefore, to compensate for the edge that is removed from the maximal matching set, we need to add at most one edge to the matching.
The two endpoints of the newly added edge correspond to two edges in $G_{i+1}$.
Therefore, the total number of vertices that require updating priorities is at most $4$. 
Moreover, deleting one edge from the existing matching set will affect the priorities of at most $2$ vertices.

In summary, $\bigO{\deg u}$ nodes need to be updated when we delete $u$. Consequently, the total update time of $P$ is in $\bigO{m \log n}$.
Moreover, the total update time for $Z(G)$ is in $\bigO{n' + m'}$. Updating $M$ requires finding a new edge which may cost $\bigO{n}$ time, consequently, updating $M$ requires $\bigO{nm}$ total time. Finally, the update time for $(U, E(U))$ is in $\bigO{m}$.

Initially, constructing $Z(G)$ requires  $\bigO{\sum_{v \in V} \deg(v)^2} \subseteq \bigO{nm}$ time and \algvcm requires $\bigO{n' + m'} \subseteq \bigO{nm}$ time.

Combining these times proves the claim.
\end{proof}

\section{Experimental evaluation}\label{sec:exp}

\begin{table*}[t]
\caption{
Results of the experiments for synthetic and real datasets. 
Here, $\lambda$ is the weight parameter, columns $\acu$, $\ape$, $\agr$, $\alp$, and $\aip$ represent  \algdenseg, \algdensec, \alggreedyfastest, \alglpstc, and \algip, respectively, columns in $\score{}$ are the discovered scores, columns in $\abs{E_s} \%$ give the percentages of {\em strong} edges in discovered subgraph, and columns in time give the computational time.
}

\label{tab:real-result}
\setlength{\tabcolsep}{0pt}
\pgfplotstabletypeset[
    begin table={\begin{tabular*}{\textwidth}},
    end table={\end{tabular*}},
    col sep=comma,
	columns = {name,lam,score-2,score-3,score-1,score-4,score-5,s_p-2,s_p-3,s_p-1,s_p-4,s_p-5,t-2,t-3,t-1,t-4,t-5},
    columns/name/.style={string type, column type={@{\extracolsep{\fill}}l}, column name=\emph{Dataset}},
    columns/lam/.style={fixed, set thousands separator={\,}, column type=r, column name=$\lambda$},
    columns/score-1/.style={fixed, set thousands separator={\,}, column type=r, column name=$\ago$}, 
    columns/score-5/.style={fixed, set thousands separator={\,}, column type=r, column name=$\aip$}, 
    columns/len-1/.style={fixed, set thousands separator={\,}, column type=r, column name=$b$},
    columns/s_p-1/.style={fixed, set thousands separator={\,}, column type=r, column name=$\ago$}, 
    columns/w_p-1/.style={fixed, set thousands separator={\,}, column type=r, column name=$w$}, 
    columns/t-1/.style={fixed, set thousands separator={\,}, column type=r, column name=$\ago$,
        assign cell content/.code={%
            \ifstrequal{##1}{}{}{%
                \pgfkeyssetvalue{/pgfplots/table/@cell content}{%
                    \pgfmathtruncatemacro\minutes{##1/60}%
                    \ifnumcomp{\minutes}{=}{0}{
                        \pgfmathprintnumber{##1}s
                    }{%
                        \pgfmathtruncatemacro\seconds{##1 - \minutes*60}%
                        {\minutes}m{\seconds}s%
                    }%
                }%
            }%
        }%
    },
    columns/score-2/.style={fixed, set thousands separator={\,}, column type=r, column name=$\acu$}, 
    columns/len-2/.style={fixed, set thousands separator={\,}, column type=r, column name=$b$},
    columns/s_p-2/.style={fixed, set thousands separator={\,}, column type=r, column name=$\acu$}, 
    columns/w_p-2/.style={fixed, set thousands separator={\,}, column type=r, column name=$w$}, 
    columns/t-2/.style={fixed, set thousands separator={\,}, column type=r, column name=$\acu$,
        assign cell content/.code={%
            \ifstrequal{##1}{}{}{%
                \pgfkeyssetvalue{/pgfplots/table/@cell content}{%
                    \pgfmathtruncatemacro\minutes{##1/60}%
                    \ifnumcomp{\minutes}{=}{0}{
                        \pgfmathprintnumber{##1}s
                    }{%
                        \pgfmathtruncatemacro\seconds{##1 - \minutes*60}%
                        {\minutes}m{\seconds}s%
                    }%
                }%
            }%
        }%
    },
    columns/t-5/.style={fixed, set thousands separator={\,}, column type=r, column name=$\aip$,
        assign cell content/.code={%
            \ifstrequal{##1}{}{}{%
                \pgfkeyssetvalue{/pgfplots/table/@cell content}{%
                    \pgfmathtruncatemacro\minutes{##1/60}%
                    \ifnumcomp{\minutes}{=}{0}{
                        \pgfmathprintnumber{##1}s
                    }{%
                        \pgfmathtruncatemacro\seconds{##1 - \minutes*60}%
                        {\minutes}m{\seconds}s%
                    }%
                }%
            }%
        }%
    },
    columns/score-4/.style={fixed, set thousands separator={\,}, column type=r, column name=$\alp$}, 
    columns/len-4/.style={fixed, set thousands separator={\,}, column type=r, column name=$b$},
    columns/s_p-4/.style={fixed, set thousands separator={\,}, column type=r, column name=$\alp$}, 
    columns/s_p-5/.style={fixed, set thousands separator={\,}, column type=r, column name=$\aip$}, 
    columns/w_p-4/.style={fixed, set thousands separator={\,}, column type=r, column name=$w$}, 
    columns/t-4/.style={fixed, set thousands separator={\,}, column type=r, column name=$\alp$,
        assign cell content/.code={%
            \ifstrequal{##1}{}{}{%
                \pgfkeyssetvalue{/pgfplots/table/@cell content}{%
                    \pgfmathtruncatemacro\minutes{##1/60}%
                    \ifnumcomp{\minutes}{=}{0}{
                        \pgfmathprintnumber{##1}s
                    }{%
                        \pgfmathtruncatemacro\seconds{##1 - \minutes*60}%
                        {\minutes}m{\seconds}s%
                    }%
                }%
            }%
        }%
    },
    columns/score-3/.style={fixed, set thousands separator={\,}, column type=r, column name=$\ape$}, 
    columns/len-3/.style={fixed, set thousands separator={\,}, column type=r, column name=$b$},
    columns/s_p-3/.style={fixed, set thousands separator={\,}, column type=r, column name=$\ape$}, 
    columns/w_p-3/.style={fixed, set thousands separator={\,}, column type=r, column name=$w$}, 
    columns/t-3/.style={fixed, set thousands separator={\,}, column type=r, column name=$\ape$,
        assign cell content/.code={%
            \ifstrequal{##1}{}{}{%
                \pgfkeyssetvalue{/pgfplots/table/@cell content}{%
                    \pgfmathtruncatemacro\minutes{##1/60}%
                    \ifnumcomp{\minutes}{=}{0}{
                        \pgfmathprintnumber{##1}s
                    }{%
                        \pgfmathtruncatemacro\seconds{##1 - \minutes*60}%
                        {\minutes}m{\seconds}s%
                    }%
                }%
            }%
        }%
    },
    every head row/.style={
		before row={\toprule
		& &
		\multicolumn{5}{l}{$\score{}$} &
		\multicolumn{5}{l}{$\abs{E_s}\%$} &
		\multicolumn{5}{l}{time} \\
		\cmidrule(r){3-7}
		\cmidrule(r){8-12}
            \cmidrule(r){13-17}
		},
			after row=\midrule},
    every last row/.style={after row=\bottomrule},
]
{comb-results.csv}
\end{table*}

\begin{table}[t!]
\setlength{\tabcolsep}{0pt}
\caption{
Characteristics of real-world datasets. Here, $\abs{V}$ and $\abs{E}$ give the number of
vertices and edges, $\abs{V(Z)}$ and $\abs{E(Z)}$ are the number of vertices and edges in the wedge graph, $d = \abs{E(U)} / \abs{U}$
gives the density of the densest subgraph, and $d_{c}$ is the density induced by \prbmaxclique. 
}

\label{tab:real-data}
\pgfplotstabletypeset[
    begin table={\begin{tabular*}{\columnwidth}},
    end table={\end{tabular*}},
    col sep=comma,
    columns = {name,n,m,nw,mw,dense,d},
    columns/name/.style={string type, column type={@{\extracolsep{\fill}}l}, column name=\emph{Dataset}},
    columns/n/.style={fixed, set thousands separator={\,}, column type=r, column name=$\abs{V}$},
    columns/m/.style={fixed, set thousands separator={\,}, column type=r, column name=$\abs{E}$},
    columns/nw/.style={fixed, set thousands separator={\,}, column type=r, column name=$\abs{V(Z)}$},
    columns/mw/.style={fixed, set thousands separator={\,}, column type=r, column name=$\abs{E(Z))}$},
    columns/dense/.style={fixed, set thousands separator={\,}, column type=r, column name=$d$}, 
    columns/d/.style={fixed, set thousands separator={\,}, column type=r, column name=$d_{c}$}, 
    every head row/.style={
		before row={\toprule},
			after row=\midrule},
    every last row/.style={after row=\bottomrule},
]
{real-world.csv}
\end{table}

Next, we evaluate our
algorithms experimentally. We first generate a synthetic dataset with a dense subgraph component and test how well our algorithms perform.
Next, we study the performance of the algorithms on real-world networks.
We implemented the algorithms in Python\footnote{The source code is available at \url{https://version.helsinki.fi/dacs/}.
\label{foot:code}} and performed the experiments using a 2.4GHz Intel Core i5 processor and 16GB RAM. In our experimental evaluation, we used Gurobi solver in Python to solve the ILPs and LPs associated with \algip and \alglpstc respectively.

\textbf{Synthetic dataset}:
We will now explain how the synthetic dataset was generated.
Given a vertex set $V$ of size 230, we split $V$ into dense and sparse components $D$ and $S$. Here, we randomly selected $D$ to have 38 nodes and $S$ to have 192 nodes.
We sampled the edges using a stochastic block model, with the edge probabilities
being $p_d = 1$, $p_s = 0.3$, and $p_c = 0.05$ for dense component, sparse component, and cross edges, respectively. 
The resulting graph had $5\,197$ edges, and the wedge graph had $5\,197$ nodes and $179\,100$ edges. The density of $D$ was $\abs{E(D)} / \abs{D} = 18.5$.

\textbf{Results using synthetic dataset}:
We report our results in Table~\ref{tab:real-result}. First, we see that all our algorithms find the ground truth by achieving a score of $18.5$ which is the density of our planted clique of size $38$ for example when $\lambda = 0.4$ and $\lambda = 0.2$. Note that \algip produced the results within an hour only for the $\lambda = 0.2$ case. Since \algip solves an ILP in each round, it was inefficient to run for the other $\lambda$ values and we stopped the execution after one hour.

As $\lambda$ increases, our algorithms tend to find a score greater than $18.5$ by deviating away from the planted clique. We also see that \alglpstc which solves a linear program runs significantly slower than \alggreedyfastest, \algdensec, and \algdenseg algorithms. 

Next, we study how the scores and the percentage of weak edges vary as a function of $\lambda$ as shown in Figure~\ref{fig:lam-q-syn}. We can observe that both \algdenseg and \algdensec produce equal scores whereas \algdensec and \alglpstc slightly underperform at $\lambda = 0.6$ and $\lambda = 0.5$ respectively as shown in Fig.~\ref{fig:sla}. Moreover, the \alglpstc slightly outperforms other algorithms when $\lambda \geq 0.7$. In terms of percentages of weak edges, all three algorithms produced the same decreasing trends according to Fig.~\ref{fig:slb}. 
There are no weak edges in the subgraphs produced by any of the algorithms when $\lambda \leq 0.4$ since scores are only contributed by the planted clique. Recall the connection to the maximum clique problem for $\lambda=0$ from Proposition~\ref{prop:np_hardness}.

Finally, we study the running time as a function of the number of edges $\abs{E}$ and the number of wedges $\abs{V(Z)}$ in Figure~\ref{fig:e-w-t}. We randomly generated $6$  datasets each with $5\,000$ nodes. The number of edges of the datasets uniformly ranges from $1 \times 10^4$ to $1.1 \times 10^5$.
We see that \algdenseg and \algdensec are the fastest while \alglpstc is the slowest.

\begin{figure}[t!]
\begin{subcaptiongroup}
\phantomcaption\label{fig:sla}
\phantomcaption\label{fig:slb}
\begin{center}
\begin{tabular}{ll}
\begin{tikzpicture}
\begin{axis}[xlabel={$\lambda$}, ylabel= {$\score{\cdot; \lambda}$},
    width = 4.5cm,
    height = 3.8cm,
    xmin = 0,
    xmax = 1,
    ymin = 15,
    ymax = 40,
    scaled y ticks = true,
    cycle list name=yaf,
    yticklabel style={/pgf/number format/fixed},
    no markers,
    legend entries = {\alggreedyfastest, \algdenseg, \algdensec,\alglpstc},
    legend pos = north west
]
\addplot table [x=x, y=y1, col sep=comma] {syn_lam.csv};
\addplot table [x=x, y=y2, col sep=comma] {syn_lam.csv};
\addplot table [x=x, y=y3, col sep=comma] {syn_lam.csv};
\addplot table [x=x, y=y4, col sep=comma] {syn_lam.csv};
\pgfplotsextra{\yafdrawaxis{0}{1}{15}{40}}
\end{axis}
\node[anchor=north east] at (0, -0.3) {(a)};
\end{tikzpicture}&
\begin{tikzpicture}
\begin{axis}[xlabel={$\lambda$}, ylabel= {$\abs{E_s}\%$},
    width = 4.5cm,
    height = 3.8cm,
    xmin = 0,
    xmax = 1,
    ymin = 0,
    ymax = 100,
    scaled y ticks = true,
    cycle list name=yaf,
    yticklabel style={/pgf/number format/fixed},
    no markers
]
\addplot table [x=x, y=p1, col sep=comma] {syn_lam.csv};
\addplot table [x=x, y=p2, col sep=comma] {syn_lam.csv};
\addplot table [x=x, y=p3, col sep=comma] {syn_lam.csv};
\addplot table [x=x, y=p4, col sep=comma] {syn_lam.csv};
\pgfplotsextra{\yafdrawaxis{0}{1}{0}{100}}
\end{axis}
\node[anchor=north east] at (0, -0.3) {(b)};
\end{tikzpicture}

\end{tabular}
\end{center}
\end{subcaptiongroup}
\caption{Scores and percentages of strong edges as a function of $\lambda$ for \dtname{Synthetic} dataset.
}
\label{fig:lam-q-syn}
\end{figure}
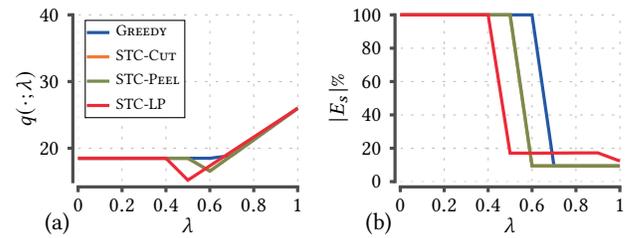

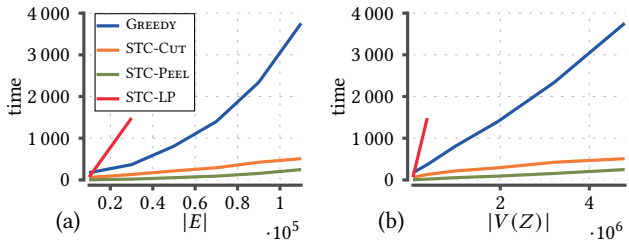
\begin{figure}[t!]
\begin{subcaptiongroup}
\phantomcaption\label{fig:eta}
\phantomcaption\label{fig:wtb}
\begin{center}
\begin{tabular}{ll}
\begin{tikzpicture}
\begin{axis}[xlabel={$\abs{E}$}, ylabel= {time},
    width = 4.4cm,
    height = 3.8cm,
    xmin = 10000,
    xmax = 110000,
    ymin = 0,
    ymax = 4000,
    scaled y ticks = true,
    cycle list name=yaf,
    yticklabel style={/pgf/number format/fixed},
    no markers,
    legend entries = {\alggreedyfastest, \algdenseg, \algdensec,\alglpstc},
    legend style={at={(0.5,1.0)}}
]
\addplot table [x=x, y=y1, col sep=comma] {time.csv};
\addplot table [x=x, y=y2, col sep=comma] {time.csv};
\addplot table [x=x, y=y3, col sep=comma] {time.csv};
\addplot table [x=x, y=y4, col sep=comma] {time.csv};
\pgfplotsextra{\yafdrawaxis{10000}{110000}{0}{4000}}
\end{axis}
\node[anchor=north east] at (0, -0.3) {(a)};
\end{tikzpicture}&
\begin{tikzpicture}
\begin{axis}[xlabel={$\abs{V(Z)}$}, ylabel= {time},
    width = 4.4cm,
    height = 3.8cm,
    xmin = 38000,
    xmax = 4800000,
    ymin = 0,
    ymax = 4000,
    scaled y ticks = true,
    cycle list name=yaf,
    yticklabel style={/pgf/number format/fixed},
    no markers,
    legend pos = north west
]
\addplot table [x=x1, y=y1, col sep=comma] {time.csv};
\addplot table [x=x1, y=y2, col sep=comma] {time.csv};
\addplot table [x=x1, y=y3, col sep=comma] {time.csv};
\addplot table [x=x1, y=y4, col sep=comma] {time.csv};
\pgfplotsextra{\yafdrawaxis{38000}{4800000}{0}{4000}}
\end{axis}
\node[anchor=north east] at (0, -0.3) {(b)};
\end{tikzpicture}

\end{tabular}
\end{center}
\end{subcaptiongroup}
\caption{Time in seconds as a function of the number of edges $\abs{E}$ and the number of wedges $\abs{V(Z)}$.
}
\label{fig:e-w-t}
\end{figure}

\textbf{Real-world datasets}:
We test our algorithms in publicly available, real-world datasets:
\dtname{Cora}~\cite{nr2015}\footnote{\url{https://networkrepository.com} \label{foot:nw-repo}} and \dtname{Cite-seer}~\cite{nr2015}\footref{foot:nw-repo} datasets are citation networks.
\dtname{Email-EU} is a collaboration network between researchers.\!\footnote{\url{https://toreopsahl.com/datasets/}}
\dtname{Facebook} is extracted from a friendship circle of Facebook.\!\footnote{\url{http://snap.stanford.edu}\label{foot:snap}}
\dtname{LastFM} is a social network 
of LastFM users.\!\footref{foot:snap} 
\dtname{PGP} is an interaction network of the users of the Pretty Good Privacy (PGP) algorithm.\!\footnote{\url{http://konect.cc/networks/arenas-pgp/}}
The details of the datasets are shown in Table~\ref{tab:real-data}.

\textbf{Results using real-world datasets}:
We present the results obtained using our algorithms in Table~\ref{tab:real-result}. We compare our algorithms in terms of scores, running time, and the percentage of the strong edges within
subgraphs they returned for a set of $\lambda$ values.
Since \algip invokes a sequence of integer programs, the algorithm does not scale for large datasets. 
We stopped the experiments that took over one hour. 
We always set $\epsilon = 0.01$ when testing each dataset with \algip.

First, let us compare the scores across the algorithms.
Our first observation is that \algip always yields the highest score with the tested datasets while all other algorithms perform similarly in terms of scores: in most cases, they produce approximately equal scores.
When \algip is not usable, \alglpstc has produced the maximum score except for $3$ outlier cases where \alggreedyfastest and \algdensec algorithms obtained the maximum score  $2$ times and $1$ time respectively. Nevertheless, for high $\lambda$s all of them produce less deviated scores when compared to lower $\lambda$s.
As expected, $\score{}$ increases as $\lambda$ increases for all $4$ algorithms. 

Next, let us look at column $\abs{E_s} \%$ which gives the percentages of strong edges in the returned subgraph.
Generally speaking, $\abs{E_s}\%$ monotonically decreases as $\lambda$ increases except for a few outlier cases. This is because when we assign a higher weight $\lambda$, it becomes more beneficial to include more weak edges.

Computation times are given in the $time$ columns of Table~\ref{tab:real-result}. 
\alggreedyfastest, \alglpstc, and \algdenseg run significantly slower than \algdensec.  
If we compare \alggreedyfastest and \algdenseg, for all the tested cases \algdenseg runs faster despite having to solve a sequence of minimum cuts. This is due to the implementation differences as \algdenseg uses a fast native library to compute the minimum cuts.

Despite \algip not being scalable for larger datasets, it runs faster than \alggreedyfastest except for four cases with the tested datasets.
For \dtname{Cora}, \dtname{Citeseer}, \dtname{PGP}, and \dtname{LastFM}  datasets, \alglpstc runs faster than all other algorithms except \algdensec.  However, for the two other remaining datasets, 
\alglpstc is the slowest in comparison to the other three algorithms.
We see that the running times are still reasonable in practice for the tested datasets; for example, we were able to compute the subgraph for the \dtname{Facebook}
dataset, with over $30\,000$ edges and $1\, 000 \,000$ wedges, in under ten minutes.

\begin{table}[ht!]
\caption{Co-authorship case-study for \dtname{DBLP} dataset with weighted variant of \algip. We set $\lambda = 0.8$ and $\epsilon = 0.01$. For each subgraph, we state the scores within brackets.
}
\label{tab:dblp-lam-ip-w}
\begin{tabular*}{\columnwidth} { l p{7.4cm}}
\toprule
S1 & P. S.Yu, C. C.Aggarwal, J.Han, W.Fan, J.Gao, X.Kong~(6.00)  \\
S2 & C. H. Q.Ding, F.Nie, H.Huang, D.Luo~(4.78) \\
S3 & S.Yan, J.Yan, N.Liu, Z.Chen, H.Xiong, Q.Yang, Y.Fu, Y.Ge, H.Zhu, E.Chen, C.Liu, Q.Liu, B.Zhang~(4.70)\\
S4 & S.Lin, H.Hsieh, C.Li~(4.23)\\
S5 & C.Faloutsos, J.Sun, S.Papadimitriou, H.Tong, L.Akoglu, T.Eliassi-Rad, B.Gallagher~(4.09)\\
S6 & Y.Liu, M.Zhang, S.Ma, L.Ru~(3.84)\\
S7 & H.Liu, J.Tang, X.Hu, H.Gao~(3.80)\\
S8 & D.Phung, S.Venkatesh, S. KumarGupta, S.Rana, S.Tsumoto, S.Hirano~ (3.76)\\
S9 & C.Böhm, I. S.Dhillon, C.Plant, C.Hsieh, P.Ravikumar~(3.54)\\
S10 & S.Günnemann, H.Kremer, T.Seidl, I.Assent, E.Müller, R.Krieger~(3.46) \\
\bottomrule
\end{tabular*}
\end{table}

\textbf{Case study}: Next, we conducted a case study for \dtname{DBLP}
~\cite{tang2008arnetminer}\footnote{\url{https://www.aminer.org/citation}} which contains co-authorship connections from top venues in data mining and machine learning~(SDM, NIPS, ICDM, KDD, ECMLPKDD, and WWW). Each node represents an author and each edge corresponds to a collaboration between two authors.
We removed the author pairs who have less than $3$ collaborations. The size of the dataset after prepossessing is $n = 4\,592$, $m =  5\,566$, and $\abs{Z(G)} =  26\,073$. To compute a marginal weight that corresponds to an author pair, we assign a weight for each paper as one divided by the number of authors. We then weigh each edge~(author-pair) by summing up the weights of all respective collaborations.
Then we ran a weighted version of \algip whose objective is to maximize the edge-weighted score,
\[
\frac{\sum_{\text{strong } e \in E(U)}w(e) + \lambda \sum_{\text{weak } e \in E(U)}w(e)}{\abs{U}}\quad. 
\]
We found top-10 non-overlapping subgraphs iteratively by deleting the returned subgraph in each iteration and then considering the remaining graph to find the next subgraph.
We set $\lambda = 0.8$ and $\epsilon = 0.01$. The list of author subgraphs is shown in Table~\ref{tab:dblp-lam-ip-w}. 
We see that the variant of \algip discovered subgraphs of prolific authors. 

\section{Concluding remarks}\label{sec:conclusions}

We introduced a novel dense subgraph discovery problem that takes into account the strength of ties within the subgraph.
Here we label each edge either as {strong} or {weak} based on the strong triadic closure principle (\stc). The \stc property requires that if one node {strongly} connects with two other nodes, then those two nodes should at least have a  {weak} connection between them. 
Our goal was to maximize a density-like measure defined as the sum of the number of strong edges and the number of weak edges weighted by a weight parameter, divided by the number of nodes within the subgraph.
We showed that our optimization problem is \np-hard, and connects the two well-known problems of finding dense subgraphs and maximum cliques. To solve the problem, we presented an exact algorithm based on integer linear programming. In addition, we presented a polynomial-time algorithm based on linear programming, a greedy heuristic algorithm, and two other straightforward algorithms based on the algorithms for the densest subgraph discovery.

The experiments with synthetic data showed that our approach recovers the
latent dense components. The experiments on real-world networks confirmed that our
proposed algorithms discovered the subgraphs reasonably fast in practice.
Finally, we presented a case study where our algorithm produced interpretable results suggesting the practical usefulness of our problem setting and algorithms.

The idea of combining the dense subgraph problem
with the \stc property opens up several lines of work. For example, 
instead of using the ratio between the number of edges and the number of nodes as the density measure, we can incorporate other density measures.

\begin{acks}
This research is supported by the \grantsponsor{⟨malsome⟩}{Academy of Finland project MALSOME}{} (\grantnum[]{malsome}{343045}).
\end{acks}

\bibliographystyle{ACM-Reference-Format}
\balance
\bibliography{bibliography}


\appendix
\section{Appendix}

\subsection{Computational complexity proofs}
\label{appendix:complexity}
\begin{proof}[Proof of Proposition~\ref{prop:np_hardness}]
We will show the \np-hardness of \prbstrwk by a reduction from the \np-hard \prbmaxclique problem. As $\lambda=0$, we simply write the score as $q(U, L)$ instead of $q(U, L; \lambda) = q(U, L; 0)$ for brevity. Assume that the set $U$ with the labeling $L$ is an optimal solution to the \prbstrwk problem with $\lambda = 0$, maximizing the score $q(U, L)$ while satisfying the \stc property. 
The density of the strong edges in $U$ is then

\begin{equation}
\label{equation:strong_density}
q(U, L) = \frac{\sum_{v \in U} \deg_s(v, U, L)}{2\abs{U}}.
\end{equation}

Consider a vertex $w$ in $U$ with the highest number $\deg_s(w, U, L)$ of strong edges connected to it. As the maximum number of strong edges, $\deg_s(w, U, L)$ has to be at least the average, 
\begin{equation}
\label{equation:average_degree}
\deg_s(w, U, L) \geq \frac{\sum_{v \in U} \deg_s(u, U, L)}{\abs{U}}.
\end{equation}

Then for any two vertices $v$ and $u$ that have strong edges $(v,w)$ and $(u,w)$ connecting them to $w$, there must be an edge between $v$ and $u$ to satisfy the \stc property.

Thus, the vertex $w$ and its strong neighbors form a clique $C$ with $\deg_s(w, U, L)+1$ vertices.
Consider only having these vertices in $C$ and having a labeling $L'$ that labels each edge in the clique as strong. This would satisfy the \stc property and would give a score of 
\begin{equation*}
q(C, L') = \frac{\abs{C}(\abs{C}-1)}{2\abs{C}} = \frac{\abs{C}-1}{2} = \frac{\deg_s(w, U, L)}{2} .
\end{equation*}

From Equations~\ref{equation:strong_density} and~\ref{equation:average_degree}, we get 
\begin{equation}
\label{equation:clique_score_comparison}
q(C, L') = \frac{\deg_s(w, U, L)}{2} \geq \frac{\sum_{v \in U} \deg_s(v, U, L)}{2\abs{U}} = q(U, L).
\end{equation}

Thus, the clique $C$ has at least the same score as $U$, which has the maximum score of all subgraphs, so $q(C, L') = q(U, L)$. This means that $C$ must be a maximum size clique in the input graph $G$, as larger cliques would give a higher score than $U$.

Therefore, by finding an optimal set of vertices $U$ and labeling $L$ we can find a maximum clique $C$. Thus, \prbstrwk is \np-hard.
\end{proof}

\begin{proof}[Proof of Proposition~\ref{prop:inapproximability}]
Assume that we can find a set $U$ with labeling $L$ that is an $\alpha$-approximation to \prbstrwk, while the optimal solution has value $q(C^*, L^*)$ with a maximum clique $C^*$ and labeling $L^*$. Consider then the vertex $w$ with the highest number of strong edges and construct the clique $C$ consisting of $w$ and its strong neighbors. Define the labeling $L'$ such that each edge in the clique $C$ is labeled as strong. Using Equation~\ref{equation:clique_score_comparison} and that $q(U, L)$ is an $\alpha$-approximation, we get 
\begin{equation*}
q(C, L') \geq q(U, L) \geq \alpha q(C^*, L^*).
\end{equation*}
But as the score of a clique $C$ with only strong edges is $q(C, L') = \frac{\abs{C}-1}{2}$, we have 

\begin{equation*}
\frac{\abs{C}-1}{2} \geq \alpha\frac{\abs{C^*}-1}{2}.
\end{equation*}

By solving for $\abs{C}$ and using $\alpha \leq 1$, we get

\begin{equation*}
\abs{C} \geq \alpha \pr{\abs{C^*}-1} + 1 \geq \alpha\abs{C^*}.
\end{equation*}

This means that we have an $\alpha$-approximation for \prbmaxclique.
Therefore, any inapproximability results for \prbmaxclique also apply for the $\lambda = 0$ case of \prbstrwk. Using the result by~\citet{zuckerman2006linear} then finishes the proof.
\end{proof}

To prove Proposition~\ref{prop:nphard2} we need the following lemma.

\begin{lemma}
\label{lem:opt}
Assume graph $G$. Let $X \subsetneq Y$ be two subgraphs, and let $L$ be a labeling defined on $Y$. Define
\[
    \Delta(X, Y) = \frac{m_s(Y) + \lambda m_w(Y) - m_s(X) - \lambda m_w(X)}{\abs{Y} - \abs{X}}\quad.
\]
If $\Delta(X, Y) < \score{Y, L}$, then $\score{Y, L} < \score{X, L}$, if $\Delta(X, Y) > \score{Y, L}$, then $\score{Y, L} > \score{X, L}$,
and if $\Delta(X, Y) = \score{Y, L}$, then $\score{Y, L} = \score{X, L}$.
\end{lemma}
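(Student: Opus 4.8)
The plan is to recognize this statement as the standard mediant-of-fractions inequality. Writing both $\score{Y, L}$ and $\Delta(X, Y)$ as ratios, the score of $Y$ turns out to be exactly the mediant of $\score{X, L}$ and $\Delta(X, Y)$, so it must lie between them; comparing it against $\Delta(X, Y)$ therefore tells us immediately on which side $\score{X, L}$ sits. I would package this as a single sign computation rather than three separate case analyses.

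Concretely, I would first abbreviate $a = m_s(X) + \lambda m_w(X)$, $b = m_s(Y) + \lambda m_w(Y)$, $p = \abs{X}$, and $q = \abs{Y}$, so that $\score{X, L} = a/p$, $\score{Y, L} = b/q$, and $\Delta(X, Y) = (b - a)/(q - p)$. Since $X \subsetneq Y$ we have $q - p > 0$, and of course $p, q > 0$, so every denominator appearing below is strictly positive. This positivity is the only structural fact I need, and it is precisely where the properness of the inclusion enters.

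The crux is to put the two relevant differences over a common denominator and observe that both are governed by the same quantity $bp - aq$. On the one hand,
\[
\score{Y, L} - \score{X, L} = \frac{b}{q} - \frac{a}{p} = \frac{bp - aq}{pq},
\]
whose sign equals that of $bp - aq$ because $pq > 0$. On the other hand,
\[
\Delta(X, Y) - \score{Y, L} = \frac{b - a}{q - p} - \frac{b}{q} = \frac{(b - a)q - b(q - p)}{(q - p)q} = \frac{bp - aq}{(q - p)q},
\]
whose sign is again that of $bp - aq$ because $(q - p)q > 0$.

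Comparing the two displays, $\Delta(X, Y) - \score{Y, L}$ and $\score{Y, L} - \score{X, L}$ always carry the same sign, and reading off the three cases then finishes the argument: $\Delta(X, Y) < \score{Y, L}$ forces $bp - aq < 0$, hence $\score{Y, L} < \score{X, L}$, and the strict reverse inequality and the equality case follow identically. I do not expect a genuine obstacle here; the only points requiring care are keeping the algebra in the second identity correct and remembering that the proper-inclusion hypothesis is exactly what guarantees $q - p > 0$, so that dividing by it preserves signs rather than reversing them.
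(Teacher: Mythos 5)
Your proof is correct and takes essentially the same route as the paper's: both clear denominators and reduce the comparison to elementary algebra, with the paper doing one case explicitly and noting the others are identical, while you unify all three cases by showing $\Delta(X,Y)-\score{Y,L}$ and $\score{Y,L}-\score{X,L}$ share the sign of $bp-aq$. The mediant packaging is a slightly cleaner presentation of the same computation, and like the paper you implicitly use $\abs{X}>0$.
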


\begin{proof}
Assume $\Delta(X, Y) < \score{Y, L}$. Multiply by $(\abs{Y} - \abs{X})\abs{Y}$ and subtract $\abs{Y} (m_s(Y)+\lambda m_w(Y))$ from both sides. Dividing by $-\abs{X}\abs{Y}$ then gives $\score{X,L} > \score{Y,L}$, proving the first claim. The proofs for other claims are identical.
\end{proof}

\begin{proof}[Proof of Proposition~\ref{prop:nphard2}]
We will prove the hardness by reducing an \np-hard problem \prbminSTC to our problem. In \prbminSTC, we are asked to label the full graph and minimize the number of weak edges~\citep{sintos2014using}.
Assume a graph $G$ with nodes $V = v_1, \ldots v_n$. We assume that $n \geq 5$.
We define a new graph $H$ that consists of $G$ and $k = \ceil{1/\lambda}(n + 1)/2$ cliques $C_i$ of size $n$. Let $\set{c_{ij}}$ be the nodes in $C_i$. For each $j$ and $i$, we connect $v_j$ with $c_{ij}$.

Let $U$ be the optimal subgraph of $H$ for \prbstrwk and $L$ be its labeling.
Let $L'$ be the labeling where every $E(C_i)$ is strong and the remaining edges are weak. Note that $\score{U} \geq \score{C_i, L'} = (n - 1)/2 \geq 2$ for any $C_i$. We claim that $U$ contains every node in $H$.


To prove the claim, let us define $W_i = U \cap C_i$. If $\abs{W_i} = 1, 2$, then $\Delta(U \setminus W_i, U) \leq 3/2 < \score{U}$ and Lemma~\ref{lem:opt} states that we can delete $W_i$ from $U$ and obtain a better score. Assume $3 \leq \abs{W_i} < n$. Let $c \in C_i \setminus W_i$. We can safely assume that the edges between $W_i$ and $G$ are weak; otherwise, we can relabel them as weak and compensate by labeling any weak edge in $W_i$ as strong. Now we can extend the labeling $L$ to $c$ by setting the edges from $c$ to $W_i$ as strong, and the possibly remaining edge as weak. We can show that $\Delta(U \setminus W_i, U) < \Delta(U,U \cup \set{c})$. Lemma~\ref{lem:opt}, applied twice, states that either deleting $W_i$ or adding $c$ improves the solution. Therefore, either $W_i = \emptyset$  or $W_i = C_i$.

Assume $W_i = C_i$ and $W_j = \emptyset$. The optimal labeling must be such that all edges between $W_i$ and $G$ are weak and the edges in $W_i$ are all strong. We can extend the same labeling scheme to $C_j$. Then $\Delta(U \setminus C_i, U) = \Delta(U, U \cup C_j)$. If $\Delta(U, U \cup C_j) > \score{U}$,  Lemma~\ref{lem:opt} implies that we improve the solution by adding $C_j$, which is a contradiction. Hence, $\Delta(U \setminus C_i, U) \leq \score{U}$. Lemma~\ref{lem:opt} implies that we can safely delete $C_i$. Applying this iteratively we arrive to an optimal solution with nodes only in $V$.
This cannot happen since then $\score{U} \leq (n - 1)/2$,
but then $\score{C_i \cup V, L'} = (n - 1)/2 + \lambda/2 > q(U)$. Therefore, $W_i = C_i$ for every $i$.

Finally, assume $v_j \notin U$. Then $\Delta(U, U \cup \set{v_j}) \geq \lambda k \geq (n + 1)/2 > (n - 1)/2 + \lambda \geq \Delta(U \setminus C_i, U)$. Lemma~\ref{lem:opt} states that either deleting any $C_i$ or adding $v_j$ improves the solution. This contradicts the optimality of $U$, so every $v_j \in U$.

Consequently, $V \subseteq U$. The optimal labeling must have every edge in $C_i$ as strong, the cross-edges between $C_i$ and $G$ as weak, and the labels for edges in $G$ solve \prbminSTC.
\end{proof}

\subsection{Proofs for Section~\ref{sec:ip}}
\label{appendix:ip}

\begin{proof}[Proof of Proposition~\ref{prop:frac}]
Let us write $f(U, L) = m_s(U, L) + \lambda m_w(U, L)$.
Note that
\begin{equation}
\label{eq:alphapos}
    f(U(\alpha), L(\alpha)) - \alpha \abs{U(\alpha)} \geq 0.
\end{equation}
Assume $\alpha > \alpha^*$. If $U(\alpha) \neq \emptyset$, then Eq.~\ref{eq:alphapos} implies that
\[
    \score{U(\alpha), L(\alpha)} \geq \alpha > \alpha^*,
\]
which contradicts the optimality of $\alpha^*$. Thus, $U(\alpha) = \emptyset$.

Assume $\alpha < \alpha^*$. Then
\[
\begin{split}
    f(U(\alpha), L(\alpha)) - \alpha \abs{U(\alpha)} & \geq f(U^*, L^*) - \alpha \abs{U^*} \\
    & > f(U^*, L^*) - \alpha^* \abs{U^*} = 0.
\end{split}
\]
That is, $f(U(\alpha), L(\alpha)) >  \alpha \abs{U(\alpha)}$, implying in turn that $U(\alpha) \neq \emptyset$ and $\score{U(\alpha), L(\alpha)} > \alpha$.
\end{proof}

\begin{proof}[Proof of Proposition~\ref{prop:opt-ip-approx}]
Let $L$ and $U$ be the values of the interval when binary search is terminated. Note that $\alpha \geq L$ due to Proposition~\ref{prop:frac}.
We know that $U - L \leq \epsilon L$ and $L \leq \alpha^* \leq U$. Thus,
$\alpha^* - L \leq U - L \leq \epsilon L$, or $\alpha^*  \leq (1 + \epsilon) L \leq (1 + \epsilon) \alpha$.
\end{proof}

\begin{proof}[Proof of Proposition~\ref{prop:opt-ip-exact}]
Let $\alpha$ be the score of the solution $X$, $L$ returned by $\algip$, and let $\alpha^*$ be the score of the optimal solution $X^*$, $L^*$ for \prbstrwk.
We will show that if $\alpha < \alpha^*$, then $\alpha^* - \alpha \geq 1/(bn^2)$, which contradicts with the fact that $\alpha^* - \alpha \leq \epsilon \alpha < \epsilon n / 2 = 1/(bn^2)$.

To prove the claim, let $\Delta = \alpha^* - \alpha$. Then
\[
\begin{split}
	\Delta & =
	\frac{m_s(X^*) + \frac{a}{b} m_w(X^*)}{\abs{X^*}} - \frac{m_s(X) + \frac{a}{b} m_w(X)}{\abs{X} } \\
	& = \frac{ \abs{X} (bm_s(X^*) + a m_w(X^*)) -  \abs{X^*}(bm_s(X) + a m_w(X))}{b \abs{X} \abs{X^*}}.
\end{split}
\]
Note that the numerator and the denominator are both integers.
Consequently, if $\Delta > 0$, then $\Delta \geq 1/(bn^2)$.
It follows that if we set $\epsilon = \frac{2}{bn^3}$, then \algip finds the optimal solution in  $\bigO{\log n + \log b}$ number of rounds.
\end{proof}

\subsection{Proofs for Section~\ref{sec:lp}}
\label{appendix:lp}

\begin{proof}[Proof of Proposition~\ref{prop:frac2}]
Scaling $(x^*, y^*, z^*)$ by any constant $c > 0$ does not
change the value of $r(\cdot, \cdot, \cdot)$ nor does it change the validity of the constraints in Eqs.~\ref{ip_con_1}--\ref{ip_con_3}. Therefore, we can safely assume that $x_e^*, z_e^* \leq 1$ and $y_i^* \leq 1$ and $i \in V$, for any $e \in E$ and $i \in V$.
The claim now follows by repeating the steps of the proof of Proposition~\ref{prop:frac}.
\end{proof}

\begin{proof}[Proof of Proposition~\ref{prop:lp}]
Scaling $(x^*, y^*, z^*)$ by any constant $c > 0$ does not
change the value of $r(\cdot, \cdot, \cdot)$ nor does it change the validity of the constraints in Eqs.~\ref{ip_con_1}--\ref{ip_con_3}. Therefore, we can safely require that $\sum y_i = 1$, which immediately proves the claim.
\end{proof}

\end{document}